\documentclass[pra,twocolumn,floatfix,superscriptaddress,notitlepage,longbibliography]{revtex4-1}

\usepackage[normalem]{ulem}
\usepackage{graphicx, color, graphpap}
\usepackage{enumitem}
\usepackage{amssymb}
\usepackage{amsthm}
\usepackage{multirow}
\usepackage[colorlinks=true,citecolor=blue,linkcolor=magenta]{hyperref}
\usepackage[T1]{fontenc}
\usepackage{verbatim}
\usepackage{mathtools}
\usepackage{titlesec}

\long\def\ca#1\cb{} 



\newcommand{\abs}[2][]{#1| #2 #1|}
\newcommand{\braket}[2]{\langle #1 \hspace{1pt} | \hspace{1pt} #2 \rangle}

\newcommand{\norm}[2][]{#1| \! #1| #2 #1| \! #1|}

\newcommand{\ket}[1]{|#1\rangle}               
\newcommand{\bra}[1]{\langle #1|}              
\newcommand{\dya}[1]{\ket{#1}\!\bra{#1}}
\newcommand{\matl}[3]{\langle #1|#2|#3\rangle} 







\newcommand{\HC}{\mathcal{H}}
\newcommand{\IC}{\mathcal{I}}

\newcommand{\OC}{\mathcal{O}}

\newcommand{\Tr}{{\rm Tr}}

\renewcommand{\geq}{\geqslant}
\renewcommand{\leq}{\leqslant}

\DeclareMathOperator*{\argmax}{arg\,max}


\newcommand{\ad}{^\dagger}

\newcommand*{\id}{\openone}



{}
{}
\newtheorem{theorem}{Theorem}
\newtheorem{lemma}{Lemma}
\newtheorem{corollary}{Corollary}
\newtheorem{proposition}{Proposition}

\newtheorem{definition}{Definition}




\begin{document}
\title{Sub-Quantum Fisher Information}

\author{M. Cerezo}
\thanks{The first two authors contributed equally to this work.}
\affiliation{Theoretical Division, Los Alamos National Laboratory, Los Alamos, New Mexico 87545, USA}
\affiliation{Center for Nonlinear Studies, Los Alamos National Laboratory, Los Alamos, New Mexico 87545, USA}
\affiliation{Quantum Science Center, Oak Ridge, TN 37931, USA}

\author{Akira Sone}
\thanks{The first two authors contributed equally to this work.}
\affiliation{Theoretical Division, Los Alamos National Laboratory, Los Alamos, New Mexico 87545, USA}
\affiliation{Center for Nonlinear Studies, Los Alamos National Laboratory, Los Alamos, New Mexico 87545, USA}
\affiliation{Quantum Science Center, Oak Ridge, TN 37931, USA}
\affiliation{Aliro Technologies, Inc. Boston, Massachusetts 02135, USA}

\author{Jacob L. Beckey}
\affiliation{Theoretical Division, Los Alamos National Laboratory, Los Alamos, New Mexico 87545, USA}
\affiliation{JILA, NIST and University of Colorado, Boulder, Colorado 80309}
\affiliation{Department of Physics, University of Colorado, Boulder, Colorado 80309}

\author{Patrick J. Coles}
\affiliation{Theoretical Division, Los Alamos National Laboratory, Los Alamos, New Mexico 87545, USA}
\affiliation{Quantum Science Center, Oak Ridge, TN 37931, USA}

\begin{abstract} 
The Quantum Fisher Information (QFI) plays a crucial role in quantum information theory and in many practical applications such as quantum metrology. However, computing the QFI is generally a computationally demanding task. In this work we analyze a lower bound on the QFI which we call the sub-Quantum Fisher Information (sub-QFI). The bound can be efficiently estimated on a quantum computer for an $n$-qubit state using $2n$ qubits. The sub-QFI is based on the super-fidelity, an upper bound on Uhlmann’s fidelity. We analyze the sub-QFI in the context of unitary families, where we derive several crucial properties including its geometrical interpretation. In particular, we prove that the QFI and the sub-QFI are maximized for the same optimal state, which implies that the sub-QFI is faithful to the QFI in the sense that both quantities share the same global extrema. Based on this faithfulness, the sub-QFI acts as an efficiently computable surrogate for the QFI for quantum sensing and quantum metrology applications. Finally, we provide additional meaning to the sub-QFI as a measure of coherence, asymmetry, and purity loss.
\end{abstract}
\maketitle

The Quantum Fisher Information (QFI) is a fundamental quantity in quantum information theory~\cite{petz2011introduction,hayashi2016quantum,pezze2009entanglement,modi2011quantum,Sone2018quantifying,sone2019nonclassical,takeoka2016optimal,schuff2020improving} with applications to quantum metrology~\cite{giovannetti2011advances,demkowicz2012elusive,degen2017quantum,pezze2018quantum,nichols2018multiparameter}, quantum channel estimation~\cite{katariya2020geometric}, and condensed matter physics~\cite{wang2014quantum,ye2016scaling,macieszczak2016dynamical}. 
Given a quantum state $\rho_{\theta}$ parameterized by a parameter $\theta$, the estimation uncertainty of $\theta$ is lower bounded by the inverse of the QFI $I(\rho_{\theta})$ via the quantum Cram\'{e}r-Rao bound. For instance, in a quantum metrology setting, the larger the QFI is, the more information about $\theta$ one can obtain from $\rho_{\theta}$. This indicates that the QFI can be regarded as a good measure to evaluate the sensing capabilities of a quantum state.

Despite its tremendous relevance, computing the QFI is usually a challenging task for both classical and quantum computers~\cite{watrous2002quantum}. Most classical methods that estimate $I(\rho_{\theta})$ require: 1) full knowledge of $\rho_{\theta}$, usually obtained via full quantum state tomography, and 2) diagonalization of density matrices~\cite{fiderer2020general}. However, both of these tasks are inefficient for large problem sizes because of the exponential scaling of the Hilbert space dimension. In addition, even if one can access $\rho_{\theta}$ on a quantum computer, no quantum algorithm can efficiently, directly compute the QFI due to its non-linear functional dependence on $\rho_{\theta}$ ~\cite{cerezo2020variationalfidelity}.

One can potentially circumvent the issue of exactly evaluating the QFI by instead  estimating efficiently computable bounds on $I(\rho_{\theta})$, as the latter provides a range in which the QFI actually lies~\cite{girolami2014observable,modi2016fragile,girolami2017detecting,toth2017lower,apellaniz2017optimal,sone2020generalized}. For metrology applications (where higher QFI is better), lower bounds are of particular interest as they can guarantee that a state $\rho_{\theta}$ has at least a certain QFI value. This can be used to check if the quality of the state for quantum sensing is above a desired threshold.

Recently, the advent of the so-called Noisy Intermediate Scale Quantum (NISQ) computers~\cite{preskill2018quantum} has opened  the exciting possibility of using these near-term devices for quantum sensing and quantum metrology tasks. Among the most promising applications using NISQ devices are variational quantum approaches~\cite{cerezo2020variationalreview,endo2021hybrid,bharti2021noisy,cerezo2020variational,bravo2020variational} where one attempts to variationally prepare the state that maximizes the QFI~\cite{koczor2020variational,yang2020probe,meyer2020variational,beckey2020variational,ma2020adaptive}. Indeed, most of these algorithms optimize lower bounds on the QFI by training a unitary that prepares the best possible input state.

In this work, we analyze a lower bound for the QFI, which we call the sub-Quantum Fisher Information (sub-QFI), and which can be efficiently computed on a quantum computer for an $n$ qubit state $\rho_\theta$ with $2n$ qubits. The properties of sub-QFI and its operational meaning are studied for states belonging to 
the unitary families. 

Our main results are as follows. First, we obtain a simple analytical expression for the sub-QFI for unitary families which allows us to analyze  its algebraic properties and its geometric interpretation. Second, we rigorously prove that the optimal mixed state that maximizes the QFI is always the same optimal state that maximizes the sub-QFI. This allows us to conclude that the sub-QFI is faithful to the QFI in the sense that both quantities share the same global extrema.  Then, we relate the sub-QFI to other relevant quantities in the literature to connect  the sub-QFI  with  coherence and asymmetry measures, and to the fragility of the probe state to stochastic error in metrology tasks.

Finally, we discuss using the sub-QFI as an efficiently computable surrogate for the QFI for quantum metrology applications in NISQ devices. Particularly, the fact that the sub-QFI can be efficiently computed with just $2n$ qubits makes it an attractive quantity. Moreover,  the faithfulness of the sub-QFI guarantees that if one maximizes this quantity, then the state prepared will also maximize the QFI. Hence, we expect that our contributions can be applied in the near-future for various practical metrology tasks using near-term quantum computers.

\bigskip

\textit{Sub-Quantum Fisher Information.}
Let $\rho$ be an $n$-qubit input state, known as the \textit{probe state}, which interacts with a source that encodes the information of a parameter $\theta$ of interest. The state resulting from the interaction, $\rho_\theta$, is known as the \textit{exact state}.  The ultimate precision $\Delta\theta$ in estimating $\theta$ from $\nu$ measurements on $\rho_\theta$ is bounded by the Quantum Fisher Information (QFI)  $I(\rho_\theta)$ via the quantum Cram\'er-Rao bound: 
\begin{align}
    (\Delta\theta)^2\geq \frac{1}{\nu I( \rho_\theta)}\,.
\label{eq:standardQFI2}
\end{align}
The QFI is defined as 
\begin{align}
    I(\rho_{\theta})=\Tr[L_{\theta}^2\rho_{\theta}]\,,
\label{eq:standardQFI}
\end{align}
where $L_{\theta}$ is the Symmetric Logarithmic Derivative (SLD) operator satisfying $\partial_{\theta}\rho_{\theta}=\frac{1}{2}(L_{\theta}\rho_{\theta}+\rho_{\theta}L_{\theta})$.
Moreover, defining the \textit{error state}, $\rho_{\theta+\delta}$, the QFI can also be related to Uhlmann's quantum fidelity from the equality~\cite{braunstein1994statistical,liu2014fidelity,Hayashi2004Quantum} 
\begin{align}
    I(\rho_{\theta})   & =-4\lim_{\delta\to 0}\partial_{\delta}^2 F(\rho_{\theta},\rho_{\theta+\delta})\label{eq:fidelity0}\\
    & =8\lim_{\delta\to 0}\frac{1-F(\rho_{\theta},\rho_{\theta+\delta})}{\delta^2}\,,\label{eq:fidelity}
\end{align}
where $F(\rho,\sigma)=|\!|\sqrt{\rho}\sqrt{\sigma}|\!|_{1}$ is the quantum fidelity, and  $\norm{A}_1=\Tr[\sqrt{AA\ad}]$ is the trace norm.  
Intuitively, Eq.~\eqref{eq:fidelity0} relates the QFI to how sensitive the probe state $\rho$ is to small parameter shifts, $\delta$. The greater the concavity of $F(\rho_{\theta},\rho_{\theta+\delta})$ (quantified by its second derivative with respect to $\delta$), the larger the QFI. Viewed another way, Eq.~\eqref{eq:fidelity} has the interpretation that the more distinguishable the exact and error states are, the more information one can obtain about the unknown parameter via measurement.

Based on Eq.~\eqref{eq:fidelity} the definition of the  sub-Quantum Fisher Information (sub-QFI) is as follows.
\begin{definition}\label{def:subQFI}
Let $\rho_{\theta}$ and $\rho_{\theta+\delta}$ be the exact and error state, respectively. The sub-QFI is defined as:
\begin{align}
\label{eq:subQFI}
    \mathcal{I}\left(\rho_{\theta}\right)=8\lim_{\delta\to 0}\frac{1-\sqrt{G(\rho_{\theta},\rho_{\theta+\delta})}}{\delta^2}\,,
\end{align}
where for two quantum states $\rho$ and $\sigma$
\begin{align}
\label{eq:superfidelity}
    G(\rho,\sigma)=\Tr\left[\rho\sigma\right]+\sqrt{(1-\Tr\left[\rho^2\right])(1-\Tr\left[\sigma^2\right])}\,,
\end{align}
is the so-called super-fidelity.
\end{definition}

As shown in~\cite{Miszczak2009sub}, the square root of the super-fidelity~\cite{mendoncca2008alternative,puchala2009bound} upper bounds Uhlmann's fidelity
\begin{align}\label{eq:supQFIbound}
F(\rho_{\theta},\rho_{\theta+\delta})\leq \sqrt{G(\rho_{\theta},\rho_{\theta+\delta})}\,.
\end{align}
Hence, combining~\eqref{eq:fidelity}, \eqref{eq:subQFI}, and~\eqref{eq:supQFIbound} we readily find that the sub-QFI is a lower bound for the QFI
\begin{align}\label{eq:subQFIbound}
  \mathcal{I}\left(\rho_{\theta}\right)\leq I\left(\rho_{\theta}\right)\,.
\end{align}

Equation~\eqref{eq:superfidelity} shows that each term in the sub-QFI can be efficiently estimated  as there exists an algorithm  that computes state overlaps of the form $\Tr[\rho_\theta\rho_{\theta+\delta}]$ with a depth-two quantum circuit and classical post-processing that scales linearly with $n$~\cite{cincio2018learning}. We note that the sub-QFI bound has been used before in the study of quantum coherence, entanglement witnesses, and in quantum optics~\cite{garttner2018relating, rivas2008intrinsic, rivas2010precision}. It is because of the  broad utility of the sub-QFI that an in-depth analysis of its properties is of such importance. 

\bigskip

\textit{Main results.} In what follows we consider unitary families, i.e., we analyze the case when the parameter $\theta$ is encoded in the probe state $\rho$ via
\begin{align}
    \rho_{\theta}=W_{\theta}\rho W_{\theta}\ad\, \quad \text{with} \quad W_{\theta}=e^{-i\theta H}\,,
\end{align}
for some $\theta$-independent Hermitian operator $H$, usually called the \textit{generator}. As shown in the Appendix, the following theorems and propositions hold.
\begin{theorem}[Explicit form]\label{th:unitary}
For unitary families, the sub-QFI of Definition~\ref{def:subQFI} can be expressed as 
\begin{align}
    \IC\left(\rho_{\theta}\right)&=4\left(\Tr\left[\rho^2H^2\right]-\Tr\left[\rho H \rho H\right]\right)\label{eq:Lquantity}\\
    &=-2\Tr\left[\left[\rho,H\right]^2\right] \label{eq:explicit-form-2}\,.
\end{align}
\end{theorem}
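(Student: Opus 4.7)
The plan is to expand the super-fidelity $G(\rho_\theta,\rho_{\theta+\delta})$ to second order in $\delta$ and then take the limit in Definition~\ref{def:subQFI}. The key simplification for unitary families is that purity is conserved: since $\rho_\theta = W_\theta \rho W_\theta\ad$ with $W_\theta$ unitary, we have $\Tr[\rho_\theta^2] = \Tr[\rho_{\theta+\delta}^2] = \Tr[\rho^2]$. Consequently, the $\sqrt{(1-\Tr[\rho^2])(1-\Tr[\sigma^2])}$ term in Eq.~\eqref{eq:superfidelity} collapses to the constant $1-\Tr[\rho^2]$, so all the $\delta$-dependence of $G$ is carried by the overlap $\Tr[\rho_\theta\rho_{\theta+\delta}]$.

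Next I would use the group property $\rho_{\theta+\delta} = e^{-i\delta H}\rho_\theta e^{i\delta H}$ and Taylor expand to second order:
\begin{equation}
\rho_{\theta+\delta} = \rho_\theta - i\delta[H,\rho_\theta] - \tfrac{\delta^2}{2}[H,[H,\rho_\theta]] + O(\delta^3).
\end{equation}
Multiplying by $\rho_\theta$ and taking the trace, the linear term $-i\delta\Tr[\rho_\theta[H,\rho_\theta]]$ vanishes by cyclicity of the trace, leaving
\begin{equation}
\Tr[\rho_\theta\rho_{\theta+\delta}] = \Tr[\rho_\theta^2] - \tfrac{\delta^2}{2}\Tr\bigl[\rho_\theta[H,[H,\rho_\theta]]\bigr] + O(\delta^3).
\end{equation}
Expanding the double commutator and using cyclicity, the quadratic coefficient simplifies to $\Tr[\rho_\theta^2 H^2]-\Tr[\rho_\theta H\rho_\theta H]$. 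Because $[W_\theta,H]=0$, both traces are $\theta$-independent and coincide with $\Tr[\rho^2 H^2]-\Tr[\rho H\rho H]$.

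Adding the constant $1-\Tr[\rho^2]$ to the expansion of $\Tr[\rho_\theta\rho_{\theta+\delta}]$ gives
\begin{equation}
G(\rho_\theta,\rho_{\theta+\delta}) = 1 - \delta^2\bigl(\Tr[\rho^2 H^2]-\Tr[\rho H\rho H]\bigr) + O(\delta^3).
\end{equation}
Taking $\sqrt{\cdot}$ via $\sqrt{1-x}\approx 1-x/2$, plugging into Eq.~\eqref{eq:subQFI}, and using $8\cdot\tfrac12=4$ yields Eq.~\eqref{eq:Lquantity}. Finally, to obtain Eq.~\eqref{eq:explicit-form-2}, I would directly expand $[\rho,H]^2 = \rho H\rho H - \rho H^2\rho - H\rho^2 H + H\rho H\rho$ and apply cyclicity to get $\Tr[[\rho,H]^2] = 2\Tr[\rho H\rho H] - 2\Tr[\rho^2 H^2]$, giving $-2\Tr[[\rho,H]^2] = 4(\Tr[\rho^2 H^2]-\Tr[\rho H\rho H])$.

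The proof is essentially a bookkeeping exercise; no step is a real obstacle. The only subtlety worth being careful about is the purity-preservation argument, which is what makes the second (and seemingly annoying) term of the super-fidelity drop out as a constant and lets the whole computation reduce to a standard second-order expansion of a state overlap. Without that observation one would have to handle the $\sqrt{(1-\Tr[\rho_\theta^2])(1-\Tr[\rho_{\theta+\delta}^2])}$ term separately, which would obscure the simple closed form.
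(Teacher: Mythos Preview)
Your proposal is correct and follows essentially the same approach as the paper: exploit unitary invariance of purity to reduce $G$ to $1-\Tr[\rho^2]+\Tr[\rho_\theta\rho_{\theta+\delta}]$, expand the overlap to second order in $\delta$, take the square root, and read off the limit. The only cosmetic difference is that you organize the second-order expansion via the nested commutator $\rho_{\theta+\delta}=\rho_\theta-i\delta[H,\rho_\theta]-\tfrac{\delta^2}{2}[H,[H,\rho_\theta]]+O(\delta^3)$, whereas the paper expands the two exponentials $e^{\pm i\delta H}$ directly and multiplies out; both routes are the same computation. You also explicitly verify Eq.~\eqref{eq:explicit-form-2} by expanding $[\rho,H]^2$, which the paper leaves implicit.
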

Note that Theorem~\ref{th:unitary} implies that $\IC\left(\rho_{\theta}\right)=\IC\left(\rho\right)$ since $[W_\theta,H]=0$, meaning that the sub-QFI is independent of $\theta$. However, we continue to use the notation $\IC\left(\rho\right)$ for consistency.  Given the spectral decomposition  $\rho=\sum_{j=1}^r\lambda_j\dya{\lambda_j}$, with $r=\text{rank}(\rho)$ and with $\lambda_i \geq \lambda_{i+1}$, we find from Eq.~\eqref{eq:Lquantity} that
\begin{equation}
    \IC\left(\rho\right)=2\sum_{i,j=1}^r(\lambda_i-\lambda_j)^2 |\matl{\lambda_i}{H}{\lambda_j}|^2\,.\label{eq:expression}
\end{equation}

The following properties of the sub-QFI can be readily derived from~\cite{luo2020skew}:
\begin{itemize}
    \item Non-negativity: $0\leq \IC\left(\rho\right)$.
    \item Convexity: $\IC\left(\sum_k \omega_k \rho^{(k)}\right)\leq \sum_k \omega_k \IC\left( \rho^{(k)}\right)$ for arbitrary quantum states $\rho^{(k)}$ and for positive coefficients $\omega_k$ such that $\sum_k \omega_k=1$.
    \item Modified additivity: Given a bipartite quantum system $\HC_A\otimes\HC_B$, let $\rho^A_\theta$ ($\rho_\theta^B$) and $H_A$ ($H_B$) respectively be a quantum state and a Hermitian operator on $\HC_A$ ($\HC_B$). Then, for $\rho_{\theta}=\rho_\theta^A\otimes\rho_\theta^B$ and $H=H_A\otimes \id+\id\otimes H_B$, we have $\IC(\rho_{\theta})=\Tr[(\rho_\theta^A)^2] \IC_A(\rho_\theta^{A})+\Tr[(\rho_\theta^B)^2] \IC_B(\rho_\theta^{B})$ where $\IC_{A(B)}(\rho_\theta^{A(B)})=-2\Tr\left[[\rho_\theta^{A(B)},H_{A(B)}]^2\right]$.
    \item Modified decreasing under the partial trace: Given a bipartite quantum system $\HC=\HC_A\otimes\HC_B$ and a state $\rho_\theta\in\HC$, let $\rho_\theta^A=\Tr_B[\rho_\theta]$ denote the reduced state on subsystem $A$. Then, for $H=H_A\otimes \id$ with $H_A$ a Hermitian operator in $\HC_A$, we have $\IC(\rho_{\theta})\geq\IC_A(\rho_\theta^{A})/d_B$, where $d_B$ is the dimension of $\HC_B$. 
\end{itemize}

Now, let us define an operator $\Lambda_{\theta}$ for a full rank probe state $\rho$,
\begin{align}
    \Lambda_{\theta} = \left(\partial_{\theta}\rho_{\theta}\right)\rho_{\theta}^{-1}\,, \quad 
    \Lambda_{\theta}\ad = \rho_{\theta}^{-1}\partial_{\theta}\rho_{\theta}\,,
\label{eq:nSLD}
\end{align}
which plays a role of a non-Hermitian SLD (nSLD) operator~\cite{alipour2015extended} since $     \partial_{\theta}\rho_{\theta} = \frac{1}{2}(\Lambda_{\theta}\rho_{\theta}+\rho_{\theta}\Lambda_{\theta}\ad)$
and $\Tr[\Lambda_{\theta}\rho_{\theta}]=\Tr[\rho_{\theta}\Lambda_{\theta}\ad]=0$.  Then, we can obtain the following result. 
\begin{proposition}[Non-Hermitian SLD operator]
\label{coro:SLD}
For full rank probe states $\rho$, the sub-QFI can be expressed as 
\begin{align}
    \IC\left(\rho_{\theta}\right)=2\Tr\left[\Lambda_{\theta}\ad\Lambda_{\theta}\rho_{\theta}^2\right]\,.
\end{align}
where $\Lambda_{\theta}$ is the  nSLD operator in~\eqref{eq:nSLD}. 
\end{proposition}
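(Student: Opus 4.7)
The plan is to derive the identity directly from the commutator form of the sub-QFI given in Eq.~\eqref{eq:explicit-form-2} of Theorem~\ref{th:unitary}, by re-expressing the commutator $[\rho_\theta,H]$ in terms of $\partial_\theta \rho_\theta$ and then factoring that derivative through the nSLD operator $\Lambda_\theta$. The full-rank hypothesis enters only at the last step, since it is what makes $\rho_\theta^{-1}$ and hence $\Lambda_\theta$ well defined.

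First I would note that since $W_\theta=e^{-i\theta H}$ commutes with $H$, we have $[\rho_\theta,H] = W_\theta[\rho,H]W_\theta\ad$, so by cyclicity Theorem~\ref{th:unitary} gives
\begin{equation}
\IC(\rho_\theta) = -2\Tr\!\left[[\rho_\theta,H]^2\right].
\end{equation}
Next I would differentiate $\rho_\theta = W_\theta \rho W_\theta\ad$ in $\theta$ to obtain $\partial_\theta \rho_\theta = -i[H,\rho_\theta] = i[\rho_\theta,H]$, so $[\rho_\theta,H] = -i\,\partial_\theta \rho_\theta$. Squaring and substituting yields
\begin{equation}
\IC(\rho_\theta) = 2\Tr\!\left[(\partial_\theta \rho_\theta)^2\right].
\end{equation}

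Then, for full-rank $\rho$ (hence full-rank $\rho_\theta$), the definition~\eqref{eq:nSLD} of the nSLD operator gives the two factorizations $\partial_\theta \rho_\theta = \Lambda_\theta \rho_\theta$ and $\partial_\theta \rho_\theta = \rho_\theta \Lambda_\theta\ad$. Using one copy of each inside the trace and invoking the cyclic property, I would conclude
\begin{equation}
\Tr\!\left[(\partial_\theta \rho_\theta)^2\right] = \Tr\!\left[\Lambda_\theta \rho_\theta^2 \Lambda_\theta\ad\right] = \Tr\!\left[\Lambda_\theta\ad \Lambda_\theta \rho_\theta^2\right],
\end{equation}
which together with the previous display gives the claimed identity.

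There is no real obstacle here beyond the bookkeeping: the proposition is essentially a rewriting of the commutator form of Theorem~\ref{th:unitary} via the nSLD. The only subtlety worth flagging in the write-up is the role of full rank, which ensures the factorization $\partial_\theta \rho_\theta = \Lambda_\theta \rho_\theta = \rho_\theta \Lambda_\theta\ad$ is meaningful. If desired, one could additionally verify the parenthetical remarks that $\partial_\theta \rho_\theta = \tfrac12(\Lambda_\theta \rho_\theta + \rho_\theta \Lambda_\theta\ad)$ and $\Tr[\Lambda_\theta \rho_\theta] = 0$ by symmetrizing the two factorizations and using $\Tr[\partial_\theta \rho_\theta] = \partial_\theta \Tr[\rho_\theta] = 0$.
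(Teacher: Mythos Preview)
Your proposal is correct and follows essentially the same route as the paper: both arguments hinge on writing $\partial_\theta\rho_\theta=i[\rho_\theta,H]$, reducing the claim to $\Tr[(\partial_\theta\rho_\theta)^2]=\Tr[\Lambda_\theta\ad\Lambda_\theta\rho_\theta^2]$, and then invoking the commutator form of Theorem~\ref{th:unitary}. The only cosmetic difference is direction---the paper starts from $\Tr[\Lambda_\theta\ad\Lambda_\theta\rho_\theta^2]$ and simplifies toward $-\Tr[[\rho,H]^2]$, whereas you start from the commutator and factor through $\Lambda_\theta$---but the content is the same.
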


If $[\rho_{\theta},\partial_{\theta}\rho_{\theta}]=0$, then $[\rho_{\theta}^{-1},\partial_{\theta}\rho_{\theta}]=0$, $\Lambda_{\theta}$ becomes the standard SLD operator $L_\theta$ in Eq.~\eqref{eq:standardQFI}~\cite{paris2009quantum,liu2016quantum}. Note that when $\rho_{\theta}$ is a pure state $\dya{\psi_{\theta}}$, then Eq.~\eqref{eq:subQFIbound} is saturated because the super-fidelity becomes the standard fidelity when $\rho_{\theta}$ is a pure state. 
In general, however, the sub-QFI will only lower bound the QFI. Nonetheless, one can still derive the following theorem which shows that $\IC\left(\rho \right)$ and $I\left(\rho\right)$ are always simultaneously  maximized for the same optimal state.
\begin{theorem}[Optimal state]\label{th:max}
For any quantum state $\rho$ and for any Hermitian generator $H$, the sub-QFI of Definition~\ref{def:subQFI} and the  QFI in~\eqref{eq:fidelity} are maximized for the same state preparation $U \rho U\ad$, with the optimal unitary being
\small
\begin{align}\label{eq:Uopt}
    U_*=\argmax_{U} \IC\left(U\rho U\ad\right)=\argmax_{U}I\left(U\rho U\ad\right)\,.
\end{align}
\normalsize
Here, the maximum is taken over the unitary group of degree $d$, with $d=2^n$.
\end{theorem}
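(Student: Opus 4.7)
The plan is to express both the sub-QFI and the QFI as non-negative quadratic forms in the matrix elements of $H$ taken in the eigenbasis of $\rho$, reduce the maximization over state preparations to an optimization over the unitary orbit of $H$, and exhibit a single ``pairing'' configuration that simultaneously saturates the first-order optimality conditions of both objectives.

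First I would set up the problem by diagonalizing $\rho=\sum_i\lambda_i\dya{\lambda_i}$ and using Eq.~\eqref{eq:expression} to write
\begin{equation*}
\IC(U\rho U\ad)=2\sum_{i,j}(\lambda_i-\lambda_j)^2|\tilde{H}_{ij}|^2,
\end{equation*}
with $\tilde{H}=U\ad HU$ and $\tilde{H}_{ij}=\bramatket{\lambda_i}{\tilde H}{\lambda_j}$. The QFI admits the analogous spectral form
\begin{equation*}
I(U\rho U\ad)=2\sum_{\substack{i,j\\ \lambda_i+\lambda_j>0}}\frac{(\lambda_i-\lambda_j)^2}{\lambda_i+\lambda_j}|\tilde{H}_{ij}|^2.
\end{equation*}
Both are sums of $|\tilde H_{ij}|^2$ weighted by non-negative coefficients $c_{ij}$ that depend only on the spectrum of $\rho$ and vanish whenever $\lambda_i=\lambda_j$, so only the off-diagonal content of $\tilde H$ in the $\rho$-eigenbasis contributes, and maximization over $U$ is equivalent to choosing $\tilde H$ within the unitary orbit of $H$.

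Second, I would compute the first-order optimality condition in both cases. A short variational calculation with an anti-Hermitian perturbation $Y$ of $U$ yields the stationarity condition $[\tilde H,C\circ\tilde H]=0$, where $C$ is the weight matrix and $\circ$ denotes the Hadamard product. The matrices $C$ differ between the two objectives, but the condition is simultaneously satisfied whenever $\tilde H$ is block diagonal in the $\rho$-eigenbasis with $2\times 2$ blocks having equal diagonal entries: inside any such block the equation collapses to $\alpha(a-b)=0$, with $\alpha$ the relevant weight, which is insensitive to the precise value of $\alpha$. This motivates the candidate $U_*$ obtained by ordering $\lambda_1\geq\cdots\geq\lambda_d$ and $h_1\geq\cdots\geq h_d$ and pairing the eigenvectors of $\rho$ associated with $\lambda_k$ and $\lambda_{d+1-k}$ inside the $2$D subspace spanned by $\ket{h_k},\ket{h_{d+1-k}}$ as the vectors $(\ket{h_k}\pm\ket{h_{d+1-k}})/\sqrt{2}$, producing a $\tilde H_*$ with exactly the required block structure.

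To promote this common critical point to a global maximum I would invoke a majorization/rearrangement argument: both weights $c_{ij}$ are monotone non-decreasing in $|\lambda_i-\lambda_j|$, so the objective is maximized by concentrating the off-diagonal mass of $\tilde H$ on the index pairs $(k,d+1-k)$ carrying the largest gaps, which is precisely what the pairing achieves. The hardest part of the argument is this global step, since the weight matrices for $\IC$ and $I$ are genuinely different and criticality alone does not force the global maxima to coincide; ruling out other candidate configurations requires a careful comparison---for instance, a Schur-type inequality bounding $\sum c_{ij}|\tilde H_{ij}|^2$ in terms of the ordered eigenvalue gaps of $\rho$ and $H$---that applies uniformly across both weightings.
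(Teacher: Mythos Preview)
Your setup is correct and leads to the same candidate optimizer the paper exhibits: expressing both $\IC$ and $I$ as $\sum_{ij}c_{ij}|\tilde H_{ij}|^2$ with spectrum-dependent weights, and proposing the pairing state built from $(\ket{h_k}\pm\ket{h_{d+1-k}})/\sqrt{2}$. Your first-order computation $[\tilde H,C\circ\tilde H]=0$ is also correct and confirms that this pairing is a common critical point.

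The genuine gap is exactly where you flag it: the global step. Your plan relies on the observation that both weights are monotone in $|\lambda_i-\lambda_j|$ and on a vague ``Schur-type inequality'', but monotonicity alone is not strong enough to force the off-diagonal mass of $\tilde H$ onto the anti-diagonal pairs. The paper (following Fiderer \emph{et al.}) uses a sharper structural property of the weights, namely the superadditivity
\[
\lambda_{i,l}\;\geq\;\lambda_{i,j}+\lambda_{j,l}\qquad(i<j<l),
\]
which holds both for $(\lambda_i-\lambda_j)^2$ and for $(\lambda_i-\lambda_j)^2/(\lambda_i+\lambda_j)$. This is what permits the construction of auxiliary ``additive'' coefficients $q_{i,j}=\sum_{k=i}^{j-1}q_{k,k+1}$ with $q_{i,j}\geq\lambda_{i,j}$ and equality precisely on the anti-diagonal $j=d-i+1$. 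Replacing $\lambda_{i,j}$ by $q_{i,j}$ yields an upper bound $\mathcal{J}(B)\geq\IC(B)$ that factorizes into a sum of off-diagonal block Hilbert--Schmidt norms $\|h(B,k)\|_2^2$; each such block is then controlled by the Bloomfield--Watson inequality, which is saturated exactly by the pairing basis $B^*$. Finally $\mathcal{J}(B^*)=\IC(B^*)$ because only anti-diagonal terms survive. None of these three ingredients---the superadditivity lemma, the $q$-coefficient construction, or Bloomfield--Watson---appears in your plan, and I do not see how a bare rearrangement argument substitutes for them: the unitary-orbit constraint on $\tilde H$ couples all the $|\tilde H_{ij}|^2$ in a way that monotonicity of the weights does not untangle. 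So the critical-point part of your argument is fine and somewhat different in flavor from the paper's direct upper-bound route, but the promotion to a global maximum needs the specific machinery above (or an equally concrete replacement) rather than the gesture you currently have.
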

Theorem~\ref{th:max} shows that the sub-QFI and the QFI share the same optimal state $\rho^*=U_*\rho U_*\ad$. Hence, as shown in Fig.~\ref{fig:optimal}, sub-QFI and the QFI are simultaneously maximized by the input state $\rho^*$. Moreover, since $\IC\left(\rho_{\theta}\right)=0\leftrightarrow I\left(\rho_{\theta}\right)=0$ (as seen from~\eqref{eq:expression}),  then Theorem~\ref{th:max} implies the sub-QFI is \textit{faithful} to the QFI, in the sense that both QFI and sub-QFI share the same global extrema.

\begin{figure}
\centering
\includegraphics[width=0.95\columnwidth]{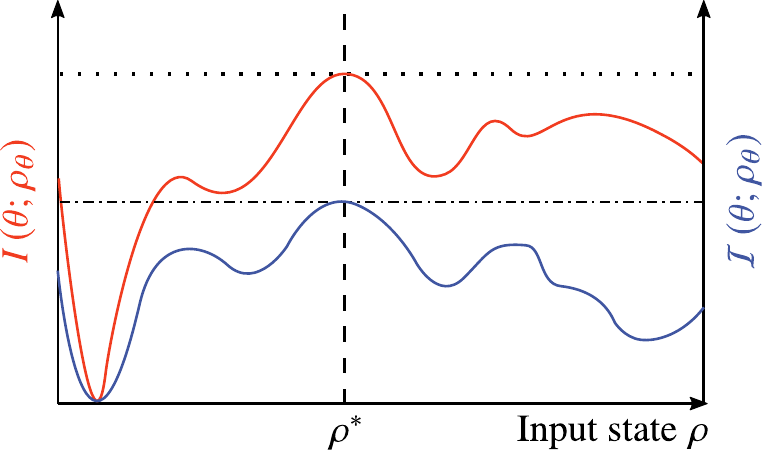}
\caption{\textbf{Schematic diagram of the QFI and sub-QFI versus the input state $\rho$.} The sub-QFI $\IC\left(\rho_{\theta}\right)$ is a lower bound for the QFI $I\left(\rho_{\theta}\right)$. As shown by Theorem~\ref{th:max}, the QFI and sub-QFI reach their maximum for the same optimal input state. Since $\IC\left(\rho_{\theta}\right)=0\leftrightarrow I\left(\rho_{\theta}\right)=0$, the sub-QFI is faithful to the QFI in the sense that both QFI and sub-QFI share the same global extrema. }
\label{fig:optimal}
\end{figure}

Let $\{h_j\}$ and $\{\ket{h_j}\}$ respectively be the set of eigenvalues and associated eigenvectors of $H$ (with $h_j\geq h_{j+1}$). Then, as shown in~\cite{fiderer2019maximal} the state that maximizes the QFI, and hence the sub-QFI, is given by $\rho^{*} =\sum_{j=1}^{d} \lambda_j \dya{\phi_j}$,  where
\begin{equation}\label{eq:optimal}
    \ket{\phi_j} =
    \begin{cases}
    \frac{1}{\sqrt{2}}\ket{h_j}+\frac{1}{\sqrt{2}}e^{i\chi}\ket{h_{d-j+1}}&(2j<d+1)\\
    \ket{h_j}&(2j=d+1)\\
   \frac{1}{\sqrt{2}}\ket{h_j}-\frac{1}{\sqrt{2}}e^{i\chi}\ket{h_{d-j+1}}&(2j>d+1)\\
    \end{cases}
\end{equation}
and where $\chi\in\mathbb{R}$ is an arbitrary phase.  Then, the following corollary holds.
\begin{corollary}[Maximal sub-QFI]
The maximal sub-QFI with respect to  all $d\times d$ state preparation unitaries is
\begin{align}
    \IC\left(\rho^*\right)&=\max_{U} \IC\left(U\rho U\ad\right)\\&
    =\frac{1}{2}\sum_{k=1}^d (\lambda_k-\lambda_{d-k+1})^2 \left(h_k-h_{d-k+1}\right)^2\,.
\end{align}
\end{corollary}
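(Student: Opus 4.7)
The plan is to combine Theorem~\ref{th:max}, which identifies the sub-QFI-maximal state as $\rho^*=U_*\rho U_*\ad=\sum_{j=1}^{d}\lambda_j\dya{\phi_j}$ with the $\ket{\phi_j}$ specified in~\eqref{eq:optimal}, with the eigenbasis formula~\eqref{eq:expression}. Plugging $\rho^*$ into~\eqref{eq:expression} reduces everything to computing the matrix elements $\bra{\phi_i}H\ket{\phi_j}$ and assembling the weighted double sum.

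The organizing observation is the pairing hidden in~\eqref{eq:optimal}: each $\ket{\phi_j}$ is supported on the two-dimensional subspace $\text{span}(\ket{h_j},\ket{h_{d-j+1}})$, so that $\ket{\phi_j}$ and $\ket{\phi_{d-j+1}}$ form an orthonormal basis of that subspace, while $\ket{\phi_i}$ and $\ket{\phi_j}$ coming from distinct pairs have disjoint support in the $\{\ket{h_k}\}$-basis. Since $H$ is diagonal in $\{\ket{h_k}\}$, the matrix element $\bra{\phi_i}H\ket{\phi_j}$ vanishes unless $\{i,d-i+1\}=\{j,d-j+1\}$. Therefore, the double sum in~\eqref{eq:expression} collapses to one over index pairs, and within each pair the diagonal terms ($i=j$) are killed by the prefactor $(\lambda_i-\lambda_j)^2=0$; only the cross-term $|\bra{\phi_j}H\ket{\phi_{d-j+1}}|^2$ survives.

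I would then compute this cross-term directly by substituting the explicit superpositions from~\eqref{eq:optimal}: a brief calculation yields $|\bra{\phi_j}H\ket{\phi_{d-j+1}}|^2=\tfrac{1}{4}(h_j-h_{d-j+1})^2$, with the arbitrary phase $\chi$ cancelling once orthonormality of the optimal basis is imposed correctly. Multiplying by the prefactor $(\lambda_i-\lambda_j)^2$ and adding the two orderings $(j,d-j+1)$ and $(d-j+1,j)$ gives a per-pair contribution of $(\lambda_j-\lambda_{d-j+1})^2(h_j-h_{d-j+1})^2$. For odd $d$, the self-paired middle index with $2j=d+1$ yields $\ket{\phi_j}=\ket{h_j}$, an $H$-eigenvector, and therefore contributes nothing.

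To obtain the closed form in the statement, I would finally rewrite the sum over unordered pairs as $\tfrac{1}{2}\sum_{k=1}^{d}(\lambda_k-\lambda_{d-k+1})^2(h_k-h_{d-k+1})^2$: the factor $\tfrac{1}{2}$ corrects for double-counting each pair by the two orderings, and any self-paired $k$ automatically contributes zero in the full sum. The main technical hurdle will be the bookkeeping of the pair structure in~\eqref{eq:optimal} together with verifying that all phase factors cancel cleanly in the surviving off-diagonal matrix element; once that is in hand, the rest is a direct application of~\eqref{eq:expression}.
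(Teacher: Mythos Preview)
Your proposal is correct and follows essentially the same route as the paper: the closed form for the corollary is obtained in the last lines of the Appendix proof of Theorem~\ref{th:max}, where the paper computes $|h_{i,j}(B^*)|^2=|\bra{\phi_i}H\ket{\phi_j}|^2=\delta_{i,d-j+1}\tfrac{1}{4}(h_i-h_{d-i+1})^2$ and substitutes into the eigenbasis sum, exactly as you outline. Your observation that the phase $\chi$ only drops out once orthonormality of the $\ket{\phi_j}$ is enforced is also apt and matches what the paper implicitly uses.
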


Finally, the following proposition provides a geometrical interpretation to the sub-QFI. 
\begin{proposition}[Geometrical interpretation] \label{prop:geo}
The sub-QFI can be expressed as
\begin{equation}\label{eq:DHS}
    \IC\left(\rho_{\theta}\right)= \partial^2_{\theta}D_{HS}(\rho,\rho_{\theta})\Bigr\rvert_{\theta = 0}\,,
\end{equation}
where $D_{HS}(A,B)=\Tr[(A-B)^2]$ is the Hilbert-Schmidt distance. 
\end{proposition}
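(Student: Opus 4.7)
The plan is to leverage Definition~\ref{def:subQFI} together with the fact that, for unitary families, the purity is conserved along the orbit. Since $\rho_{\theta}=W_{\theta}\rho W_{\theta}\ad$ is unitarily equivalent to $\rho$, we have $\Tr[\rho_{\theta}^2]=\Tr[\rho^2]$, which collapses the second term of the super-fidelity~\eqref{eq:superfidelity} into a $\theta$-independent constant. Combining this with the expansion $D_{HS}(\rho,\rho_{\theta})=2\Tr[\rho^2]-2\Tr[\rho\rho_{\theta}]$, the super-fidelity takes the simple form $G(\rho,\rho_{\theta})=1-\tfrac{1}{2}D_{HS}(\rho,\rho_{\theta})$, which is the bridge between the sub-QFI and the Hilbert--Schmidt distance.

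Next, by Theorem~\ref{th:unitary} the sub-QFI is $\theta$-independent, so $\IC(\rho_{\theta})=\IC(\rho)$ may be evaluated using $\rho$ itself as the reference point. Substituting the above identity into~\eqref{eq:subQFI} and Taylor-expanding the square root about $D_{HS}=0$ yields
\begin{equation}
1-\sqrt{G(\rho,\rho_{\delta})}=\tfrac{1}{4}D_{HS}(\rho,\rho_{\delta})+O\!\left(D_{HS}^2\right),
\end{equation}
so that
\begin{equation}
\IC(\rho)=2\lim_{\delta\to 0}\frac{D_{HS}(\rho,\rho_{\delta})}{\delta^2}.
\end{equation}
The final step is a Taylor expansion $D_{HS}(\rho,\rho_{\delta})=D_{HS}(\rho,\rho)+\delta\,\partial_{\theta}D_{HS}\rvert_{\theta=0}+\tfrac{\delta^2}{2}\,\partial_{\theta}^2 D_{HS}\rvert_{\theta=0}+O(\delta^3)$. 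The zeroth-order term vanishes trivially, and the first-order term vanishes since $\theta=0$ is a minimum of $\theta\mapsto D_{HS}(\rho,\rho_{\theta})$; equivalently, a direct calculation using $\partial_{\theta}\rho_{\theta}=-i[H,\rho_{\theta}]$ gives $\partial_{\theta}D_{HS}\rvert_{\theta=0}=2i\Tr[\rho[H,\rho]]=0$ by cyclicity of the trace. Combining these pieces yields $\IC(\rho)=\partial_{\theta}^2 D_{HS}\rvert_{\theta=0}$, as claimed.

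I do not anticipate a serious obstacle. The only technical point is confirming that $D_{HS}(\rho,\rho_{\delta})$ is genuinely $O(\delta^2)$, so that the $O(D_{HS}^2)$ error from the square-root expansion contributes only $O(\delta^4)$ and vanishes in the limit; this follows from the vanishing first derivative established above. As an independent sanity check one can bypass the super-fidelity entirely and compute $\partial_{\theta}^2 D_{HS}\rvert_{\theta=0}=-2\Tr[\rho\,\partial_{\theta}^2\rho_{\theta}\rvert_{\theta=0}]=2\Tr\left[\rho[H,[H,\rho]]\right]=4\left(\Tr[\rho^2 H^2]-\Tr[\rho H\rho H]\right)$, which matches the explicit form~\eqref{eq:Lquantity} of Theorem~\ref{th:unitary} and thus confirms Proposition~\ref{prop:geo}.
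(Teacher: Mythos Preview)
Your proof is correct, and your primary route differs from the paper's. The paper computes $\partial_\theta^2 D_{HS}(\rho,\rho_\theta)\big|_{\theta=0}$ directly: it expands the trace, differentiates twice, evaluates to obtain $-2\Tr\left[[\rho,H]^2\right]$, and then identifies this with the explicit form~\eqref{eq:explicit-form-2} from Theorem~\ref{th:unitary}. You instead return to Definition~\ref{def:subQFI} and exploit purity conservation along the unitary orbit to obtain the clean identity $G(\rho,\rho_\theta)=1-\tfrac{1}{2}D_{HS}(\rho,\rho_\theta)$, after which the result follows by two Taylor expansions without ever invoking the commutator formula. Your approach makes the geometric link between the super-fidelity and the Hilbert--Schmidt distance transparent and explains \emph{why} $D_{HS}$ is the natural metric underlying the sub-QFI; the paper's approach is more computational and treats the two sides as expressions to be matched a posteriori via Theorem~\ref{th:unitary}. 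Your closing sanity check---computing $\partial_\theta^2 D_{HS}\big|_{\theta=0}=-2\Tr[\rho\,\partial_\theta^2\rho_\theta]\big|_{\theta=0}$ and matching it to~\eqref{eq:Lquantity}---is essentially the paper's proof in condensed form.
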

Equation~\eqref{eq:DHS} shows that maximizing the sub-QFI $\IC\left(\rho_{\theta}\right)$ is equivalent to maximizing the susceptibility of the Hilbert-Schmidt distance between the probe and exact states to changes in $\theta$, and hence, minimizing the super-fidelity. Here we recall that an equation similar  to~\eqref{eq:DHS} can be obtained for the QFI by replacing the Hilbert-Schmidt distance by the Bures distance~\cite{hubner1992explicit,vsafranek2017discontinuities}, and for the Wigner-Yanase skew information in terms of the Hellinger distance~\cite{luo2004informational}. 

\bigskip

\textit{Connection to the literature.} Let us now discuss how the sub-QFI relates to other results in the literature, as this allows us to shed additional light on the properties of $\IC\left(\rho\right)$. First, we note that the quantity $-\frac{1}{2}\Tr\left[\left[\rho,H\right]^2\right]$ was introduced in~\cite{girolami2014observable}, as a lower-bound for the Wigner-Yanase skew information $I_{WY}(\rho,H)=-\frac{1}{2}\Tr\left[\left[\sqrt{\rho},H\right]^2\right]$~\cite{wigner1997information}. Hence, we find
\begin{align}\label{eq:subQFIbound2}
 I\left(\rho_{\theta}\right)\geq I_{WY}(\rho,H)\geq \frac{\mathcal{I}\left(\rho_{\theta}\right)}{8} \,.
\end{align}
Since the Wigner-Yanase skew information is  a measure of asymmetry~\cite{girolami2014observable,takagi2019skew,luo2017quantum,luo2003wigner} and a quasi-measure of the $H$-coherence~\cite{girolami2014observable,du2015wigner,marvian2016quantify} of the state $\rho$, then the sub-QFI can also be used as a bound for these quantities. We remark that in~\cite{yadin2016general}, the authors employ the quantity $-\frac{1}{2}\Tr\left[\left[\rho,H\right]^2\right]$  to derive a result similar to Proposition~\ref{prop:geo}.

\begin{figure}[t]
\centering
\includegraphics[width=1\columnwidth]{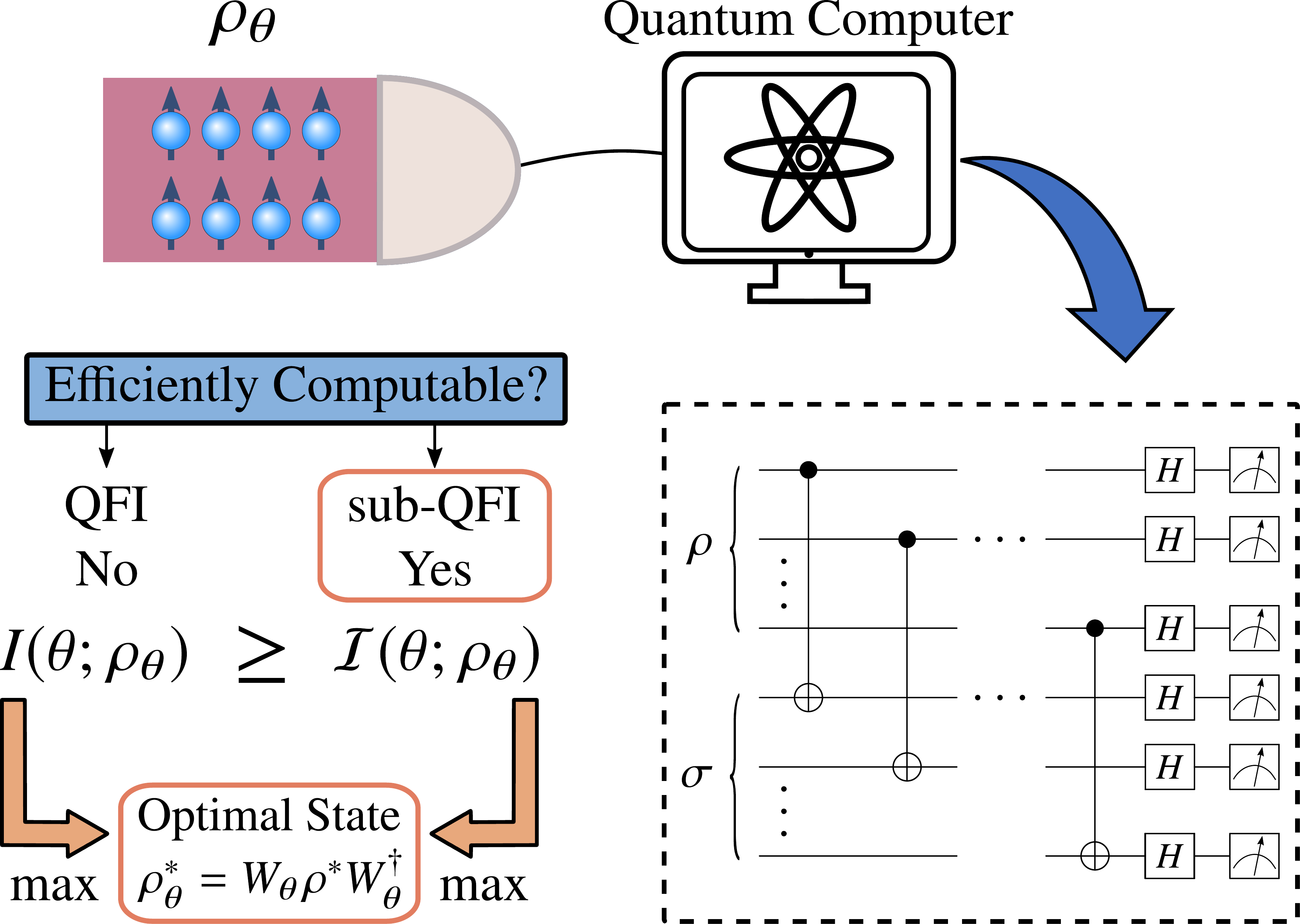}
\caption{\textbf{Use of sub-QFI for quantum metrology.} The sub-QFI can be efficiently computed from the exact state $\rho_\theta$ and error state $\rho_{\theta+\delta}$  via the depth-two circuit shown plus additional efficient classical post-processing~\cite{cincio2018learning} (here, $\rho=\rho_{\theta}$ and $\sigma=\rho_{\theta+\delta}$). Since the QFI and sub-QFI share the same optimal input state $\rho^*$, the sub-QFI can be used as an efficiently computable surrogate for the QFI when preparing states for quantum metrology.   }
\label{fig:fig2}
\end{figure}

The sub-QFI was also used in~\cite{garttner2018relating} to study of the relationship between quantum coherence and multiparticle entanglement~\cite{garttner2018relating}. Therein, the authors examine the relationship between multiple quantum coherence spectra (MQCs) and multiparticle entanglement witnesses. Specifically, it is shown that one can relate the MQCs to the QFI (a known entanglement witness) via the Sub-QFI in that the second moment of the MQC spectrum is equivalent to the Sub-QFI. Using the fact that the Sub-QFI inherits the property of being a witness of multi-particle entanglement from the QFI, they successfully establish the connection between MQCs and entanglement witnesses. 

In addition, the sub-QFI can be directly related to a quantity known as the purity loss~\cite{modi2016fragile,yang2020probe}, which indicates the sensing performance of the probe state $\rho$. Specifically, the purity loss $\Delta\gamma$ quantifies the robustness of $\rho$ to statistical fluctuations in the parameter $\theta$. For instance, given a random variable $X$ normally distributed about $\theta$ with variance $(\Delta x)^2$, and which takes value $x$ with probability $\lambda_x$, one can define the ensemble-averaged state $\rho_{\text{ave}}=\sum_x \lambda_x \rho_{\theta(x)}$  (with $\rho_{\theta(x)}=W_{\theta(x)}\rho W_{\theta(x)}\ad$).  Since the purity of $\rho_{\text{ave}}$ is smaller than that of $\rho$, the purity loss is defined as 
\begin{align}\label{eq:purityloss}
 \Delta\gamma = \Tr\left[\rho_{\theta}^2\right]-\Tr\left[\rho_{\text{ave}}^2\right]\,.
\end{align}
Here, $ \Delta\gamma$ quantifies the degree to which the probe state decoheres due to the statistical fluctuations in $\theta$~\cite{modi2016fragile}. Similarly, the ratio $\Delta\gamma/(\Delta x)^2$ characterizes the fragility of $\rho$ to this stochastic noise. 

From~\cite{modi2016fragile} we find that given a sharply distributed probability distribution of $X$ with variance $(\Delta x)^2$, the sub-QFI is related to the purity loss of~\eqref{eq:purityloss} via
\begin{equation}\label{eq:purlosssqfi}
    \IC\left(\rho_{\theta}\right)\approx 2 \frac{\Delta\gamma}{(\Delta x)^2}\,.
\end{equation}
Equation~\eqref{eq:purlosssqfi} shows that the sub-QFI quantifies how  fragile the probe state $\rho$ is to stochastic fluctuations in the parameter $\theta$.

Lastly, the sub-QFI has been explored in quantum optics applications \cite{rivas2008intrinsic,rivas2010precision,klimov2017optimal}. Specifically, in~\cite{rivas2008intrinsic} the authors derive the sub-QFI taking Eq.~\eqref{eq:DHS} in Proposition~\ref{prop:geo} as a starting point. Then, the authors explore the relevance of the sub-QFI in the context of quantum optics, showing that the sub-QFI can be used as a predictor of non-classicality in  states of light $\rho$ under certain conditions. After defining the quantity $A_H=H^2 -\text{:}H^2\text{:}$ (where $::$ denotes normal ordering of the operators), the  authors show that for classical states the QFI satisfies $I( \rho) \leq \Tr[\rho A_G]$. Hence, $I( \rho) > \Tr[\rho A_G]$ implies that $\rho $ is non-classical, as long as a linear scheme is used~\cite{rivas2010precision}. One can then readily conclude that the sub-QFI serves as an efficiently computable indicator of non-classicality as $\IC( \rho_{\theta}) > \Tr[\rho A_G]$ implies that $\rho $ is non-classical.

\textit{Sub-QFI for quantum metrology.} The fact that the sub-QFI can be efficiently estimated in a quantum computer and that it is faithful to the QFI means it can be employed to variationally prepare optimal states for quantum metrology.  Specifically, in this setting one trains a unitary $U$ acting on the input state with the goal of maximizing the sensing capabilities of $U\rho U\ad$~\cite{koczor2020variational,beckey2020variational}. As shown in Fig.~\ref{fig:fig2}, since one is unable to efficiently estimate the QFI, one instead computes the sub-QFI via the state overlap circuit~\cite{cincio2018learning}. Maximizing the sub-QFI means that one is able to guarantee a larger QFI value. While optimizing a lower bound could lead to a state being prepared that is not the optimal state for the QFI, Theorem~\ref{th:max} guarantees that the optimal state prepared by maximizing the sub-QFI will also maximize the QFI.

\textit{Conclusions.} We have presented and analyzed the sub-Quantum Fisher Information (sub-QFI), an efficiently computable lower bound of quantum Fisher information (QFI).  For unitary families, we have obtained rigorous results that allowed us to provide operational meaning the sub-QFI, including its connection to coherence and asymmetry measures, as well as its geometrical interpretation. In addition, we have proved that the sub-QFI is faithful to the QFI, in the sense that both quantities share the same global extrema. 

Our results pave the way for the sub-QFI to be employed to prepare the optimal state in quantum metrology and quantum sensing applications.
Moreover, due to the fact that the optimal state for sub-QFI can be deterministically prepared trough a variational circuit~\cite{beckey2020variational}, preparing the sub-QFI can overcome some of the limitations that have been to arise for probabilistic state preparation~\cite{hradil2019quantum}, and hence thus enhancing the utility of sub-QFI.

We further remark that due to its connection to many quantities of interest, the sub-QFI plays an important role in bridging  quantum metrology, quantum computing, and resource theories, and hence can help in advancing the understanding of the connection between these fields. Finally, we note that a future research direction is the study of the optimal measurement scheme associated with the optimal state for the sub-QFI.

\section*{Acknowledgements}

We thank Davide Girolami and Lukas J. Fiderer for helpful discussions. This work was supported by the Quantum Science Center (QSC), a National Quantum Information Science Research Center of the U.S. Department of Energy (DOE).  MC and AS also acknowledge initial support from the Center for Nonlinear Studies at Los Alamos National Laboratory (LANL). AS is now supported by the internal R\&D from Aliro Technologies, Inc.  JLB was initially supported by the U.S. DOE through a quantum computing program sponsored by the LANL Information Science \& Technology Institute. JLB was also supported by the National Science Foundation Graduate Research Fellowship under Grant No. 1650115.  PJC also acknowledges initial support from the LANL ASC Beyond Moore's Law project.

\bibliography{quantum.bib}

\begin{thebibliography}{64}%
\makeatletter
\providecommand \@ifxundefined [1]{%
 \@ifx{#1\undefined}
}%
\providecommand \@ifnum [1]{%
 \ifnum #1\expandafter \@firstoftwo
 \else \expandafter \@secondoftwo
 \fi
}%
\providecommand \@ifx [1]{%
 \ifx #1\expandafter \@firstoftwo
 \else \expandafter \@secondoftwo
 \fi
}%
\providecommand \natexlab [1]{#1}%
\providecommand \enquote  [1]{``#1''}%
\providecommand \bibnamefont  [1]{#1}%
\providecommand \bibfnamefont [1]{#1}%
\providecommand \citenamefont [1]{#1}%
\providecommand \href@noop [0]{\@secondoftwo}%
\providecommand \href [0]{\begingroup \@sanitize@url \@href}%
\providecommand \@href[1]{\@@startlink{#1}\@@href}%
\providecommand \@@href[1]{\endgroup#1\@@endlink}%
\providecommand \@sanitize@url [0]{\catcode `\\12\catcode `\$12\catcode
  `\&12\catcode `\#12\catcode `\^12\catcode `\_12\catcode `\%12\relax}%
\providecommand \@@startlink[1]{}%
\providecommand \@@endlink[0]{}%
\providecommand \url  [0]{\begingroup\@sanitize@url \@url }%
\providecommand \@url [1]{\endgroup\@href {#1}{\urlprefix }}%
\providecommand \urlprefix  [0]{URL }%
\providecommand \Eprint [0]{\href }%
\providecommand \doibase [0]{http://dx.doi.org/}%
\providecommand \selectlanguage [0]{\@gobble}%
\providecommand \bibinfo  [0]{\@secondoftwo}%
\providecommand \bibfield  [0]{\@secondoftwo}%
\providecommand \translation [1]{[#1]}%
\providecommand \BibitemOpen [0]{}%
\providecommand \bibitemStop [0]{}%
\providecommand \bibitemNoStop [0]{.\EOS\space}%
\providecommand \EOS [0]{\spacefactor3000\relax}%
\providecommand \BibitemShut  [1]{\csname bibitem#1\endcsname}%
\let\auto@bib@innerbib\@empty
\bibitem [{\citenamefont {Petz}\ and\ \citenamefont
  {Ghinea}(2011)}]{petz2011introduction}%
  \BibitemOpen
  \bibfield  {author} {\bibinfo {author} {\bibfnamefont {D{\'e}nes}\
  \bibnamefont {Petz}}\ and\ \bibinfo {author} {\bibfnamefont {Catalin}\
  \bibnamefont {Ghinea}},\ }\bibfield  {title} {\enquote {\bibinfo {title}
  {Introduction to quantum fisher information},}\ }in\ \href {\doibase
  10.1142/9789814338745_0015} {\emph {\bibinfo {booktitle} {Quantum probability
  and related topics}}}\ (\bibinfo  {publisher} {World Scientific},\ \bibinfo
  {year} {2011})\ pp.\ \bibinfo {pages} {261--281}\BibitemShut {NoStop}%
\bibitem [{\citenamefont {Hayashi}(2004{\natexlab{a}})}]{hayashi2016quantum}%
  \BibitemOpen
  \bibfield  {author} {\bibinfo {author} {\bibfnamefont {Masahito}\
  \bibnamefont {Hayashi}},\ }\href
  {https://www.springer.com/gp/book/9783662497234} {\emph {\bibinfo {title}
  {Quantum Information Theory: Mathematical Foundation (2nd edition)}}}\
  (\bibinfo  {publisher} {Springer},\ \bibinfo {year} {2004})\BibitemShut
  {NoStop}%
\bibitem [{\citenamefont {Pezz{\'{e}}}\ and\ \citenamefont
  {Smerzi}(2009)}]{pezze2009entanglement}%
  \BibitemOpen
  \bibfield  {author} {\bibinfo {author} {\bibfnamefont {Luca}\ \bibnamefont
  {Pezz{\'{e}}}}\ and\ \bibinfo {author} {\bibfnamefont {Augusto}\ \bibnamefont
  {Smerzi}},\ }\bibfield  {title} {\enquote {\bibinfo {title} {Entanglement,
  nonlinear dynamics, and the {H}eisenberg limit},}\ }\href {\doibase
  10.1103/PhysRevLett.102.100401} {\bibfield  {journal} {\bibinfo  {journal}
  {Phys. Rev. Lett.}\ }\textbf {\bibinfo {volume} {102}},\ \bibinfo {pages}
  {100401} (\bibinfo {year} {2009})}\BibitemShut {NoStop}%
\bibitem [{\citenamefont {Modi}\ \emph {et~al.}(2011)\citenamefont {Modi},
  \citenamefont {Cable}, \citenamefont {Williamson},\ and\ \citenamefont
  {Vedral}}]{modi2011quantum}%
  \BibitemOpen
  \bibfield  {author} {\bibinfo {author} {\bibfnamefont {Kavan}\ \bibnamefont
  {Modi}}, \bibinfo {author} {\bibfnamefont {Hugo}\ \bibnamefont {Cable}},
  \bibinfo {author} {\bibfnamefont {Mark}\ \bibnamefont {Williamson}}, \ and\
  \bibinfo {author} {\bibfnamefont {Vlatko}\ \bibnamefont {Vedral}},\
  }\bibfield  {title} {\enquote {\bibinfo {title} {Quantum correlations in
  mixed-state metrology},}\ }\href {\doibase 10.1103/PhysRevX.1.021022}
  {\bibfield  {journal} {\bibinfo  {journal} {Phys. Rev. X}\ }\textbf {\bibinfo
  {volume} {1}},\ \bibinfo {pages} {021022} (\bibinfo {year}
  {2011})}\BibitemShut {NoStop}%
\bibitem [{\citenamefont {Sone}\ \emph {et~al.}(2018)\citenamefont {Sone},
  \citenamefont {Zhuang},\ and\ \citenamefont
  {Cappellaro}}]{Sone2018quantifying}%
  \BibitemOpen
  \bibfield  {author} {\bibinfo {author} {\bibfnamefont {Akira}\ \bibnamefont
  {Sone}}, \bibinfo {author} {\bibfnamefont {Quntao}\ \bibnamefont {Zhuang}}, \
  and\ \bibinfo {author} {\bibfnamefont {Paola}\ \bibnamefont {Cappellaro}},\
  }\bibfield  {title} {\enquote {\bibinfo {title} {Quantifying precision loss
  in local quantum thermometry via diagonal discord},}\ }\href {\doibase
  10.1103/PhysRevA.98.012115} {\bibfield  {journal} {\bibinfo  {journal} {Phys.
  Rev. A}\ }\textbf {\bibinfo {volume} {98}},\ \bibinfo {pages} {012115}
  (\bibinfo {year} {2018})}\BibitemShut {NoStop}%
\bibitem [{\citenamefont {Sone}\ \emph {et~al.}(2019)\citenamefont {Sone},
  \citenamefont {Zhuang}, \citenamefont {Li}, \citenamefont {Liu},\ and\
  \citenamefont {Cappellaro}}]{sone2019nonclassical}%
  \BibitemOpen
  \bibfield  {author} {\bibinfo {author} {\bibfnamefont {Akira}\ \bibnamefont
  {Sone}}, \bibinfo {author} {\bibfnamefont {Quntao}\ \bibnamefont {Zhuang}},
  \bibinfo {author} {\bibfnamefont {Changhao}\ \bibnamefont {Li}}, \bibinfo
  {author} {\bibfnamefont {Yi-Xiang}\ \bibnamefont {Liu}}, \ and\ \bibinfo
  {author} {\bibfnamefont {Paola}\ \bibnamefont {Cappellaro}},\ }\bibfield
  {title} {\enquote {\bibinfo {title} {Nonclassical correlations for quantum
  metrology in thermal equilibrium},}\ }\href {\doibase
  10.1103/PhysRevA.99.052318} {\bibfield  {journal} {\bibinfo  {journal} {Phys.
  Rev. A}\ }\textbf {\bibinfo {volume} {99}},\ \bibinfo {pages} {052318}
  (\bibinfo {year} {2019})}\BibitemShut {NoStop}%
\bibitem [{\citenamefont {Takeoka}\ and\ \citenamefont
  {Wilde}(2016)}]{takeoka2016optimal}%
  \BibitemOpen
  \bibfield  {author} {\bibinfo {author} {\bibfnamefont {Masahiro}\
  \bibnamefont {Takeoka}}\ and\ \bibinfo {author} {\bibfnamefont {Mark~M}\
  \bibnamefont {Wilde}},\ }\bibfield  {title} {\enquote {\bibinfo {title}
  {Optimal estimation and discrimination of excess noise in thermal and
  amplifier channels},}\ }\href {https://arxiv.org/abs/1611.09165} {\bibfield
  {journal} {\bibinfo  {journal} {arXiv preprint arXiv:1611.09165}\ } (\bibinfo
  {year} {2016})}\BibitemShut {NoStop}%
\bibitem [{\citenamefont {Schuff}\ \emph {et~al.}(2020)\citenamefont {Schuff},
  \citenamefont {Fiderer},\ and\ \citenamefont {Braun}}]{schuff2020improving}%
  \BibitemOpen
  \bibfield  {author} {\bibinfo {author} {\bibfnamefont {Jonas}\ \bibnamefont
  {Schuff}}, \bibinfo {author} {\bibfnamefont {Lukas~J}\ \bibnamefont
  {Fiderer}}, \ and\ \bibinfo {author} {\bibfnamefont {Daniel}\ \bibnamefont
  {Braun}},\ }\bibfield  {title} {\enquote {\bibinfo {title} {Improving the
  dynamics of quantum sensors with reinforcement learning},}\ }\href {\doibase
  10.1088/1367-2630/ab6f1f} {\bibfield  {journal} {\bibinfo  {journal} {New
  Journal of Physics}\ }\textbf {\bibinfo {volume} {22}},\ \bibinfo {pages}
  {035001} (\bibinfo {year} {2020})}\BibitemShut {NoStop}%
\bibitem [{\citenamefont {Giovannetti}\ \emph {et~al.}(2011)\citenamefont
  {Giovannetti}, \citenamefont {Lloyd},\ and\ \citenamefont
  {Maccone}}]{giovannetti2011advances}%
  \BibitemOpen
  \bibfield  {author} {\bibinfo {author} {\bibfnamefont {Vittorio}\
  \bibnamefont {Giovannetti}}, \bibinfo {author} {\bibfnamefont {Seth}\
  \bibnamefont {Lloyd}}, \ and\ \bibinfo {author} {\bibfnamefont {Lorenzo}\
  \bibnamefont {Maccone}},\ }\bibfield  {title} {\enquote {\bibinfo {title}
  {Advances in quantum metrology},}\ }\href
  {https://www.nature.com/articles/nphoton.2011.35} {\bibfield  {journal}
  {\bibinfo  {journal} {Nat. Photonics}\ }\textbf {\bibinfo {volume} {5}},\
  \bibinfo {pages} {222--229} (\bibinfo {year} {2011})}\BibitemShut {NoStop}%
\bibitem [{\citenamefont {Demkowicz-Dobrza{\'n}ski}\ \emph
  {et~al.}(2012)\citenamefont {Demkowicz-Dobrza{\'n}ski}, \citenamefont
  {Ko{\l}ody{\'n}ski},\ and\ \citenamefont
  {Gu{\c{t}}{\u{a}}}}]{demkowicz2012elusive}%
  \BibitemOpen
  \bibfield  {author} {\bibinfo {author} {\bibfnamefont {Rafa{\l}}\
  \bibnamefont {Demkowicz-Dobrza{\'n}ski}}, \bibinfo {author} {\bibfnamefont
  {Jan}\ \bibnamefont {Ko{\l}ody{\'n}ski}}, \ and\ \bibinfo {author}
  {\bibfnamefont {M{\u{a}}d{\u{a}}lin}\ \bibnamefont {Gu{\c{t}}{\u{a}}}},\
  }\bibfield  {title} {\enquote {\bibinfo {title} {The elusive {H}eisenberg
  limit in quantum-enhanced metrology},}\ }\href {\doibase 10.1038/ncomms2067}
  {\bibfield  {journal} {\bibinfo  {journal} {Nature communications}\ }\textbf
  {\bibinfo {volume} {3}},\ \bibinfo {pages} {1--8} (\bibinfo {year}
  {2012})}\BibitemShut {NoStop}%
\bibitem [{\citenamefont {Degen}\ \emph {et~al.}(2017)\citenamefont {Degen},
  \citenamefont {Reinhard},\ and\ \citenamefont
  {Cappellaro}}]{degen2017quantum}%
  \BibitemOpen
  \bibfield  {author} {\bibinfo {author} {\bibfnamefont {C.~L.}\ \bibnamefont
  {Degen}}, \bibinfo {author} {\bibfnamefont {F.}~\bibnamefont {Reinhard}}, \
  and\ \bibinfo {author} {\bibfnamefont {P.}~\bibnamefont {Cappellaro}},\
  }\bibfield  {title} {\enquote {\bibinfo {title} {Quantum sensing},}\ }\href
  {\doibase 10.1103/RevModPhys.89.035002} {\bibfield  {journal} {\bibinfo
  {journal} {Rev. Mod. Phys.}\ }\textbf {\bibinfo {volume} {89}},\ \bibinfo
  {pages} {035002} (\bibinfo {year} {2017})}\BibitemShut {NoStop}%
\bibitem [{\citenamefont {Pezz\`e}\ \emph {et~al.}(2018)\citenamefont
  {Pezz\`e}, \citenamefont {Smerzi}, \citenamefont {Oberthaler}, \citenamefont
  {Schmied},\ and\ \citenamefont {Treutlein}}]{pezze2018quantum}%
  \BibitemOpen
  \bibfield  {author} {\bibinfo {author} {\bibfnamefont {Luca}\ \bibnamefont
  {Pezz\`e}}, \bibinfo {author} {\bibfnamefont {Augusto}\ \bibnamefont
  {Smerzi}}, \bibinfo {author} {\bibfnamefont {Markus~K.}\ \bibnamefont
  {Oberthaler}}, \bibinfo {author} {\bibfnamefont {Roman}\ \bibnamefont
  {Schmied}}, \ and\ \bibinfo {author} {\bibfnamefont {Philipp}\ \bibnamefont
  {Treutlein}},\ }\bibfield  {title} {\enquote {\bibinfo {title} {Quantum
  metrology with nonclassical states of atomic ensembles},}\ }\href {\doibase
  10.1103/RevModPhys.90.035005} {\bibfield  {journal} {\bibinfo  {journal}
  {Rev. Mod. Phys.}\ }\textbf {\bibinfo {volume} {90}},\ \bibinfo {pages}
  {035005} (\bibinfo {year} {2018})}\BibitemShut {NoStop}%
\bibitem [{\citenamefont {Nichols}\ \emph {et~al.}(2018)\citenamefont
  {Nichols}, \citenamefont {Liuzzo-Scorpo}, \citenamefont {Knott},\ and\
  \citenamefont {Adesso}}]{nichols2018multiparameter}%
  \BibitemOpen
  \bibfield  {author} {\bibinfo {author} {\bibfnamefont {Rosanna}\ \bibnamefont
  {Nichols}}, \bibinfo {author} {\bibfnamefont {Pietro}\ \bibnamefont
  {Liuzzo-Scorpo}}, \bibinfo {author} {\bibfnamefont {Paul~A}\ \bibnamefont
  {Knott}}, \ and\ \bibinfo {author} {\bibfnamefont {Gerardo}\ \bibnamefont
  {Adesso}},\ }\bibfield  {title} {\enquote {\bibinfo {title} {Multiparameter
  gaussian quantum metrology},}\ }\href {\doibase 10.1103/PhysRevA.98.012114}
  {\bibfield  {journal} {\bibinfo  {journal} {Physical Review A}\ }\textbf
  {\bibinfo {volume} {98}},\ \bibinfo {pages} {012114} (\bibinfo {year}
  {2018})}\BibitemShut {NoStop}%
\bibitem [{\citenamefont {Katariya}\ and\ \citenamefont
  {Wilde}(2020)}]{katariya2020geometric}%
  \BibitemOpen
  \bibfield  {author} {\bibinfo {author} {\bibfnamefont {Vishal}\ \bibnamefont
  {Katariya}}\ and\ \bibinfo {author} {\bibfnamefont {Mark~M}\ \bibnamefont
  {Wilde}},\ }\bibfield  {title} {\enquote {\bibinfo {title} {Geometric
  distinguishability measures limit quantum channel estimation and
  discrimination},}\ }\href {https://arxiv.org/abs/2004.10708} {\bibfield
  {journal} {\bibinfo  {journal} {arXiv preprint arXiv:2004.10708}\ } (\bibinfo
  {year} {2020})}\BibitemShut {NoStop}%
\bibitem [{\citenamefont {Wang}\ \emph {et~al.}(2014)\citenamefont {Wang},
  \citenamefont {Wu}, \citenamefont {Yang}, \citenamefont {Jin}, \citenamefont
  {Lambert},\ and\ \citenamefont {Nori}}]{wang2014quantum}%
  \BibitemOpen
  \bibfield  {author} {\bibinfo {author} {\bibfnamefont {Teng-Long}\
  \bibnamefont {Wang}}, \bibinfo {author} {\bibfnamefont {Ling-Na}\
  \bibnamefont {Wu}}, \bibinfo {author} {\bibfnamefont {Wen}\ \bibnamefont
  {Yang}}, \bibinfo {author} {\bibfnamefont {Guang-Ri}\ \bibnamefont {Jin}},
  \bibinfo {author} {\bibfnamefont {Neill}\ \bibnamefont {Lambert}}, \ and\
  \bibinfo {author} {\bibfnamefont {Franco}\ \bibnamefont {Nori}},\ }\bibfield
  {title} {\enquote {\bibinfo {title} {Quantum fisher information as a
  signature of the superradiant quantum phase transition},}\ }\href {\doibase
  10.1088/1367-2630/16/6/063039} {\bibfield  {journal} {\bibinfo  {journal}
  {New J. Phys.}\ }\textbf {\bibinfo {volume} {16}},\ \bibinfo {pages} {063039}
  (\bibinfo {year} {2014})}\BibitemShut {NoStop}%
\bibitem [{\citenamefont {Ye}\ \emph {et~al.}(2016)\citenamefont {Ye},
  \citenamefont {Hu},\ and\ \citenamefont {Wu}}]{ye2016scaling}%
  \BibitemOpen
  \bibfield  {author} {\bibinfo {author} {\bibfnamefont {En-Jia}\ \bibnamefont
  {Ye}}, \bibinfo {author} {\bibfnamefont {Zheng-Da}\ \bibnamefont {Hu}}, \
  and\ \bibinfo {author} {\bibfnamefont {Wei}\ \bibnamefont {Wu}},\ }\bibfield
  {title} {\enquote {\bibinfo {title} {Scaling of quantum fisher information
  close to the quantum phase transition in the xy spin chain},}\ }\href
  {\doibase
  https://www.sciencedirect.com/science/article/abs/pii/S0921452616303891}
  {\bibfield  {journal} {\bibinfo  {journal} {Physica B: Condensed Matter}\
  }\textbf {\bibinfo {volume} {502}},\ \bibinfo {pages} {151} (\bibinfo {year}
  {2016})}\BibitemShut {NoStop}%
\bibitem [{\citenamefont {Macieszczak}\ \emph {et~al.}(2016)\citenamefont
  {Macieszczak}, \citenamefont {Gu{\c{t}}{\u{a}}}, \citenamefont {Lesanovsky},\
  and\ \citenamefont {Garrahan}}]{macieszczak2016dynamical}%
  \BibitemOpen
  \bibfield  {author} {\bibinfo {author} {\bibfnamefont {Katarzyna}\
  \bibnamefont {Macieszczak}}, \bibinfo {author} {\bibfnamefont
  {M{\u{a}}d{\u{a}}lin}\ \bibnamefont {Gu{\c{t}}{\u{a}}}}, \bibinfo {author}
  {\bibfnamefont {Igor}\ \bibnamefont {Lesanovsky}}, \ and\ \bibinfo {author}
  {\bibfnamefont {Juan~P}\ \bibnamefont {Garrahan}},\ }\bibfield  {title}
  {\enquote {\bibinfo {title} {Dynamical phase transitions as a resource for
  quantum enhanced metrology},}\ }\href {\doibase 10.1103/PhysRevA.93.022103}
  {\bibfield  {journal} {\bibinfo  {journal} {Physical Review A}\ }\textbf
  {\bibinfo {volume} {93}},\ \bibinfo {pages} {022103} (\bibinfo {year}
  {2016})}\BibitemShut {NoStop}%
\bibitem [{\citenamefont {Watrous}(0202111)}]{watrous2002quantum}%
  \BibitemOpen
  \bibfield  {author} {\bibinfo {author} {\bibfnamefont {Joh}\ \bibnamefont
  {Watrous}},\ }\bibfield  {title} {\enquote {\bibinfo {title} {Quantum
  statistical zero-knowledge},}\ }\href
  {https://arxiv.org/abs/quant-ph/0202111} {\bibfield  {journal} {\bibinfo
  {journal} {arXiv preprint arXiv:0202111}\ } (\bibinfo {year}
  {0202111})}\BibitemShut {NoStop}%
\bibitem [{\citenamefont {Fiderer}\ \emph {et~al.}(2020)\citenamefont
  {Fiderer}, \citenamefont {Tufarelli}, \citenamefont {Piano},\ and\
  \citenamefont {Adesso}}]{fiderer2020general}%
  \BibitemOpen
  \bibfield  {author} {\bibinfo {author} {\bibfnamefont {Lukas~J}\ \bibnamefont
  {Fiderer}}, \bibinfo {author} {\bibfnamefont {Tommaso}\ \bibnamefont
  {Tufarelli}}, \bibinfo {author} {\bibfnamefont {Samanta}\ \bibnamefont
  {Piano}}, \ and\ \bibinfo {author} {\bibfnamefont {Gerardo}\ \bibnamefont
  {Adesso}},\ }\bibfield  {title} {\enquote {\bibinfo {title} {General
  expressions for the quantum fisher information matrix with applications to
  discrete quantum imaging},}\ }\href {https://arxiv.org/abs/2012.01572}
  {\bibfield  {journal} {\bibinfo  {journal} {arXiv preprint arXiv:2012.01572}\
  } (\bibinfo {year} {2020})}\BibitemShut {NoStop}%
\bibitem [{\citenamefont {Cerezo}\ \emph
  {et~al.}(2020{\natexlab{a}})\citenamefont {Cerezo}, \citenamefont {Poremba},
  \citenamefont {Cincio},\ and\ \citenamefont
  {Coles}}]{cerezo2020variationalfidelity}%
  \BibitemOpen
  \bibfield  {author} {\bibinfo {author} {\bibfnamefont {M.}~\bibnamefont
  {Cerezo}}, \bibinfo {author} {\bibfnamefont {Alexander}\ \bibnamefont
  {Poremba}}, \bibinfo {author} {\bibfnamefont {Lukasz}\ \bibnamefont
  {Cincio}}, \ and\ \bibinfo {author} {\bibfnamefont {Patrick~J}\ \bibnamefont
  {Coles}},\ }\bibfield  {title} {\enquote {\bibinfo {title} {Variational
  quantum fidelity estimation},}\ }\href {\doibase 10.22331/q-2020-03-26-248}
  {\bibfield  {journal} {\bibinfo  {journal} {Quantum}\ }\textbf {\bibinfo
  {volume} {4}},\ \bibinfo {pages} {248} (\bibinfo {year}
  {2020}{\natexlab{a}})}\BibitemShut {NoStop}%
\bibitem [{\citenamefont {Girolami}(2014)}]{girolami2014observable}%
  \BibitemOpen
  \bibfield  {author} {\bibinfo {author} {\bibfnamefont {Davide}\ \bibnamefont
  {Girolami}},\ }\bibfield  {title} {\enquote {\bibinfo {title} {Observable
  measure of quantum coherence in finite dimensional systems},}\ }\href
  {\doibase 10.1103/PhysRevLett.113.170401} {\bibfield  {journal} {\bibinfo
  {journal} {Phys. Rev. Lett.}\ }\textbf {\bibinfo {volume} {113}},\ \bibinfo
  {pages} {170401} (\bibinfo {year} {2014})}\BibitemShut {NoStop}%
\bibitem [{\citenamefont {Modi}\ \emph {et~al.}(2016)\citenamefont {Modi},
  \citenamefont {C{\'{e}}leri}, \citenamefont {Thompson},\ and\ \citenamefont
  {Gu}}]{modi2016fragile}%
  \BibitemOpen
  \bibfield  {author} {\bibinfo {author} {\bibfnamefont {Kavan}\ \bibnamefont
  {Modi}}, \bibinfo {author} {\bibfnamefont {Lucas~C.}\ \bibnamefont
  {C{\'{e}}leri}}, \bibinfo {author} {\bibfnamefont {Jayne}\ \bibnamefont
  {Thompson}}, \ and\ \bibinfo {author} {\bibfnamefont {Mile}\ \bibnamefont
  {Gu}},\ }\bibfield  {title} {\enquote {\bibinfo {title} {Fragile states are
  better for quantum metrology},}\ }\href {https://arxiv.org/abs/1608.01443}
  {\bibfield  {journal} {\bibinfo  {journal} {arXiv preprint arXiv:1608.01443}\
  } (\bibinfo {year} {2016})}\BibitemShut {NoStop}%
\bibitem [{\citenamefont {Zhang}\ \emph {et~al.}(2017)\citenamefont {Zhang},
  \citenamefont {Yadin}, \citenamefont {Hou}, \citenamefont {Cao},
  \citenamefont {Liu}, \citenamefont {Huang}, \citenamefont {Maity},
  \citenamefont {Vedral}, \citenamefont {Li}, \citenamefont {Guo},\ and\
  \citenamefont {Girolami}}]{girolami2017detecting}%
  \BibitemOpen
  \bibfield  {author} {\bibinfo {author} {\bibfnamefont {Chao}\ \bibnamefont
  {Zhang}}, \bibinfo {author} {\bibfnamefont {Benjamin}\ \bibnamefont {Yadin}},
  \bibinfo {author} {\bibfnamefont {Zhi-Bo}\ \bibnamefont {Hou}}, \bibinfo
  {author} {\bibfnamefont {Huan}\ \bibnamefont {Cao}}, \bibinfo {author}
  {\bibfnamefont {Bi-Heng}\ \bibnamefont {Liu}}, \bibinfo {author}
  {\bibfnamefont {Yun-Feng}\ \bibnamefont {Huang}}, \bibinfo {author}
  {\bibfnamefont {Reevu}\ \bibnamefont {Maity}}, \bibinfo {author}
  {\bibfnamefont {Vlatko}\ \bibnamefont {Vedral}}, \bibinfo {author}
  {\bibfnamefont {Chuan-Feng}\ \bibnamefont {Li}}, \bibinfo {author}
  {\bibfnamefont {Guang-Can}\ \bibnamefont {Guo}}, \ and\ \bibinfo {author}
  {\bibfnamefont {Davide}\ \bibnamefont {Girolami}},\ }\bibfield  {title}
  {\enquote {\bibinfo {title} {Detecting metrologically useful asymmetry and
  entanglement by a few local measurements},}\ }\href {\doibase
  10.1103/PhysRevA.96.042327} {\bibfield  {journal} {\bibinfo  {journal} {Phys.
  Rev. A}\ }\textbf {\bibinfo {volume} {96}},\ \bibinfo {pages} {042327}
  (\bibinfo {year} {2017})}\BibitemShut {NoStop}%
\bibitem [{\citenamefont {Toth}(2017)}]{toth2017lower}%
  \BibitemOpen
  \bibfield  {author} {\bibinfo {author} {\bibfnamefont {Geza}\ \bibnamefont
  {Toth}},\ }\bibfield  {title} {\enquote {\bibinfo {title} {Lower bounds on
  the quantum fisher information based on the variance and various types of
  entropies},}\ }\href {https://arxiv.org/abs/1701.07461} {\bibfield  {journal}
  {\bibinfo  {journal} {arXiv preprint arXiv:1701.07461}\ } (\bibinfo {year}
  {2017})}\BibitemShut {NoStop}%
\bibitem [{\citenamefont {Apellaniz}\ \emph {et~al.}(2017)\citenamefont
  {Apellaniz}, \citenamefont {Kleinmann}, \citenamefont {G{\"u}hne},\ and\
  \citenamefont {T{\'o}th}}]{apellaniz2017optimal}%
  \BibitemOpen
  \bibfield  {author} {\bibinfo {author} {\bibfnamefont {Iagoba}\ \bibnamefont
  {Apellaniz}}, \bibinfo {author} {\bibfnamefont {Matthias}\ \bibnamefont
  {Kleinmann}}, \bibinfo {author} {\bibfnamefont {Otfried}\ \bibnamefont
  {G{\"u}hne}}, \ and\ \bibinfo {author} {\bibfnamefont {G{\'e}za}\
  \bibnamefont {T{\'o}th}},\ }\bibfield  {title} {\enquote {\bibinfo {title}
  {Optimal witnessing of the quantum fisher information with few
  measurements},}\ }\href {\doibase 10.1103/PhysRevA.95.032330} {\bibfield
  {journal} {\bibinfo  {journal} {Physical Review A}\ }\textbf {\bibinfo
  {volume} {95}},\ \bibinfo {pages} {032330} (\bibinfo {year}
  {2017})}\BibitemShut {NoStop}%
\bibitem [{\citenamefont {Sone}\ \emph {et~al.}(2020)\citenamefont {Sone},
  \citenamefont {Cerezo}, \citenamefont {Beckey},\ and\ \citenamefont
  {Coles}}]{sone2020generalized}%
  \BibitemOpen
  \bibfield  {author} {\bibinfo {author} {\bibfnamefont {Akira}\ \bibnamefont
  {Sone}}, \bibinfo {author} {\bibfnamefont {M.}~\bibnamefont {Cerezo}},
  \bibinfo {author} {\bibfnamefont {Jacob~L}\ \bibnamefont {Beckey}}, \ and\
  \bibinfo {author} {\bibfnamefont {Patrick~J}\ \bibnamefont {Coles}},\
  }\bibfield  {title} {\enquote {\bibinfo {title} {A generalized measure of
  quantum fisher information},}\ }\href {https://arxiv.org/abs/2010.02904}
  {\bibfield  {journal} {\bibinfo  {journal} {arXiv preprint arXiv:2010.02904}\
  } (\bibinfo {year} {2020})}\BibitemShut {NoStop}%
\bibitem [{\citenamefont {Preskill}(2018)}]{preskill2018quantum}%
  \BibitemOpen
  \bibfield  {author} {\bibinfo {author} {\bibfnamefont {John}\ \bibnamefont
  {Preskill}},\ }\bibfield  {title} {\enquote {\bibinfo {title} {Quantum
  computing in the nisq era and beyond},}\ }\href {\doibase
  https://doi.org/10.22331/q-2018-08-06-79} {\bibfield  {journal} {\bibinfo
  {journal} {Quantum}\ }\textbf {\bibinfo {volume} {2}},\ \bibinfo {pages} {79}
  (\bibinfo {year} {2018})}\BibitemShut {NoStop}%
\bibitem [{\citenamefont {Cerezo}\ \emph
  {et~al.}(2020{\natexlab{b}})\citenamefont {Cerezo}, \citenamefont
  {Arrasmith}, \citenamefont {Babbush}, \citenamefont {Benjamin}, \citenamefont
  {Endo}, \citenamefont {Fujii}, \citenamefont {McClean}, \citenamefont
  {Mitarai}, \citenamefont {Yuan}, \citenamefont {Cincio},\ and\ \citenamefont
  {Coles}}]{cerezo2020variationalreview}%
  \BibitemOpen
  \bibfield  {author} {\bibinfo {author} {\bibfnamefont {M.}~\bibnamefont
  {Cerezo}}, \bibinfo {author} {\bibfnamefont {Andrew}\ \bibnamefont
  {Arrasmith}}, \bibinfo {author} {\bibfnamefont {Ryan}\ \bibnamefont
  {Babbush}}, \bibinfo {author} {\bibfnamefont {Simon~C}\ \bibnamefont
  {Benjamin}}, \bibinfo {author} {\bibfnamefont {Suguru}\ \bibnamefont {Endo}},
  \bibinfo {author} {\bibfnamefont {Keisuke}\ \bibnamefont {Fujii}}, \bibinfo
  {author} {\bibfnamefont {Jarrod~R}\ \bibnamefont {McClean}}, \bibinfo
  {author} {\bibfnamefont {Kosuke}\ \bibnamefont {Mitarai}}, \bibinfo {author}
  {\bibfnamefont {Xiao}\ \bibnamefont {Yuan}}, \bibinfo {author} {\bibfnamefont
  {Lukasz}\ \bibnamefont {Cincio}}, \ and\ \bibinfo {author} {\bibfnamefont
  {Patrick~J.}\ \bibnamefont {Coles}},\ }\bibfield  {title} {\enquote {\bibinfo
  {title} {Variational quantum algorithms},}\ }\href
  {https://arxiv.org/abs/2012.09265} {\bibfield  {journal} {\bibinfo  {journal}
  {arXiv preprint arXiv:2012.09265}\ } (\bibinfo {year}
  {2020}{\natexlab{b}})}\BibitemShut {NoStop}%
\bibitem [{\citenamefont {Endo}\ \emph {et~al.}(2021)\citenamefont {Endo},
  \citenamefont {Cai}, \citenamefont {Benjamin},\ and\ \citenamefont
  {Yuan}}]{endo2021hybrid}%
  \BibitemOpen
  \bibfield  {author} {\bibinfo {author} {\bibfnamefont {Suguru}\ \bibnamefont
  {Endo}}, \bibinfo {author} {\bibfnamefont {Zhenyu}\ \bibnamefont {Cai}},
  \bibinfo {author} {\bibfnamefont {Simon~C}\ \bibnamefont {Benjamin}}, \ and\
  \bibinfo {author} {\bibfnamefont {Xiao}\ \bibnamefont {Yuan}},\ }\bibfield
  {title} {\enquote {\bibinfo {title} {Hybrid quantum-classical algorithms and
  quantum error mitigation},}\ }\href {\doibase 10.7566/JPSJ.90.032001}
  {\bibfield  {journal} {\bibinfo  {journal} {Journal of the Physical Society
  of Japan}\ }\textbf {\bibinfo {volume} {90}},\ \bibinfo {pages} {032001}
  (\bibinfo {year} {2021})}\BibitemShut {NoStop}%
\bibitem [{\citenamefont {Bharti}\ \emph {et~al.}(2021)\citenamefont {Bharti},
  \citenamefont {Cervera-Lierta}, \citenamefont {Kyaw}, \citenamefont {Haug},
  \citenamefont {Alperin-Lea}, \citenamefont {Anand}, \citenamefont {Degroote},
  \citenamefont {Heimonen}, \citenamefont {Kottmann}, \citenamefont {Menke},
  \citenamefont {Mok}, \citenamefont {Sim}, \citenamefont {Kwek},\ and\
  \citenamefont {Aspuru-Guzik}}]{bharti2021noisy}%
  \BibitemOpen
  \bibfield  {author} {\bibinfo {author} {\bibfnamefont {Kishor}\ \bibnamefont
  {Bharti}}, \bibinfo {author} {\bibfnamefont {Alba}\ \bibnamefont
  {Cervera-Lierta}}, \bibinfo {author} {\bibfnamefont {Thi~Ha}\ \bibnamefont
  {Kyaw}}, \bibinfo {author} {\bibfnamefont {Tobias}\ \bibnamefont {Haug}},
  \bibinfo {author} {\bibfnamefont {Sumner}\ \bibnamefont {Alperin-Lea}},
  \bibinfo {author} {\bibfnamefont {Abhinav}\ \bibnamefont {Anand}}, \bibinfo
  {author} {\bibfnamefont {Matthias}\ \bibnamefont {Degroote}}, \bibinfo
  {author} {\bibfnamefont {Hermanni}\ \bibnamefont {Heimonen}}, \bibinfo
  {author} {\bibfnamefont {Jakob~S.}\ \bibnamefont {Kottmann}}, \bibinfo
  {author} {\bibfnamefont {Tim}\ \bibnamefont {Menke}}, \bibinfo {author}
  {\bibfnamefont {Wai-Keong}\ \bibnamefont {Mok}}, \bibinfo {author}
  {\bibfnamefont {Sukin}\ \bibnamefont {Sim}}, \bibinfo {author} {\bibfnamefont
  {Leong-Chuan}\ \bibnamefont {Kwek}}, \ and\ \bibinfo {author} {\bibfnamefont
  {Alán}\ \bibnamefont {Aspuru-Guzik}},\ }\bibfield  {title} {\enquote
  {\bibinfo {title} {Noisy intermediate-scale quantum (nisq) algorithms},}\
  }\href {https://arxiv.org/abs/2101.08448} {\bibfield  {journal} {\bibinfo
  {journal} {arXiv preprint arXiv:2101.08448}\ } (\bibinfo {year}
  {2021})}\BibitemShut {NoStop}%
\bibitem [{\citenamefont {Cerezo}\ \emph
  {et~al.}(2020{\natexlab{c}})\citenamefont {Cerezo}, \citenamefont {Sharma},
  \citenamefont {Arrasmith},\ and\ \citenamefont
  {Coles}}]{cerezo2020variational}%
  \BibitemOpen
  \bibfield  {author} {\bibinfo {author} {\bibfnamefont {M.}~\bibnamefont
  {Cerezo}}, \bibinfo {author} {\bibfnamefont {Kunal}\ \bibnamefont {Sharma}},
  \bibinfo {author} {\bibfnamefont {Andrew}\ \bibnamefont {Arrasmith}}, \ and\
  \bibinfo {author} {\bibfnamefont {Patrick~J}\ \bibnamefont {Coles}},\
  }\bibfield  {title} {\enquote {\bibinfo {title} {Variational quantum state
  eigensolver},}\ }\href {https://arxiv.org/abs/2004.01372} {\bibfield
  {journal} {\bibinfo  {journal} {arXiv preprint arXiv:2004.01372}\ } (\bibinfo
  {year} {2020}{\natexlab{c}})}\BibitemShut {NoStop}%
\bibitem [{\citenamefont {Bravo-Prieto}\ \emph {et~al.}(2019)\citenamefont
  {Bravo-Prieto}, \citenamefont {LaRose}, \citenamefont {Cerezo}, \citenamefont
  {Subasi}, \citenamefont {Cincio},\ and\ \citenamefont
  {Coles}}]{bravo2020variational}%
  \BibitemOpen
  \bibfield  {author} {\bibinfo {author} {\bibfnamefont {Carlos}\ \bibnamefont
  {Bravo-Prieto}}, \bibinfo {author} {\bibfnamefont {Ryan}\ \bibnamefont
  {LaRose}}, \bibinfo {author} {\bibfnamefont {M.}~\bibnamefont {Cerezo}},
  \bibinfo {author} {\bibfnamefont {Yigit}\ \bibnamefont {Subasi}}, \bibinfo
  {author} {\bibfnamefont {Lukasz}\ \bibnamefont {Cincio}}, \ and\ \bibinfo
  {author} {\bibfnamefont {Patrick}\ \bibnamefont {Coles}},\ }\bibfield
  {title} {\enquote {\bibinfo {title} {Variational quantum linear solver},}\
  }\href {https://arxiv.org/abs/1909.05820} {\bibfield  {journal} {\bibinfo
  {journal} {arXiv preprint arXiv:1909.05820}\ } (\bibinfo {year}
  {2019})}\BibitemShut {NoStop}%
\bibitem [{\citenamefont {Koczor}\ \emph {et~al.}(2020)\citenamefont {Koczor},
  \citenamefont {Endo}, \citenamefont {Jones}, \citenamefont {Matsuzaki},\ and\
  \citenamefont {Benjamin}}]{koczor2020variational}%
  \BibitemOpen
  \bibfield  {author} {\bibinfo {author} {\bibfnamefont {B{\'a}lint}\
  \bibnamefont {Koczor}}, \bibinfo {author} {\bibfnamefont {Suguru}\
  \bibnamefont {Endo}}, \bibinfo {author} {\bibfnamefont {Tyson}\ \bibnamefont
  {Jones}}, \bibinfo {author} {\bibfnamefont {Yuichiro}\ \bibnamefont
  {Matsuzaki}}, \ and\ \bibinfo {author} {\bibfnamefont {Simon~C}\ \bibnamefont
  {Benjamin}},\ }\bibfield  {title} {\enquote {\bibinfo {title}
  {Variational-state quantum metrology},}\ }\href {\doibase
  10.1088/1367-2630/ab965e} {\bibfield  {journal} {\bibinfo  {journal} {New
  Journal of Physics}\ } (\bibinfo {year} {2020}),\
  10.1088/1367-2630/ab965e}\BibitemShut {NoStop}%
\bibitem [{\citenamefont {Yang}\ \emph {et~al.}(2020)\citenamefont {Yang},
  \citenamefont {Thompson}, \citenamefont {Wu}, \citenamefont {Gu},
  \citenamefont {Peng},\ and\ \citenamefont {Du}}]{yang2020probe}%
  \BibitemOpen
  \bibfield  {author} {\bibinfo {author} {\bibfnamefont {Xiaodong}\
  \bibnamefont {Yang}}, \bibinfo {author} {\bibfnamefont {Jayne}\ \bibnamefont
  {Thompson}}, \bibinfo {author} {\bibfnamefont {Ze}~\bibnamefont {Wu}},
  \bibinfo {author} {\bibfnamefont {Mile}\ \bibnamefont {Gu}}, \bibinfo
  {author} {\bibfnamefont {Xinhua}\ \bibnamefont {Peng}}, \ and\ \bibinfo
  {author} {\bibfnamefont {Jiangfeng}\ \bibnamefont {Du}},\ }\bibfield  {title}
  {\enquote {\bibinfo {title} {Probe optimization for quantum metrology via
  closed-loop learning control},}\ }\href {\doibase 10.1038/s41534-020-00292-z}
  {\bibfield  {journal} {\bibinfo  {journal} {npj Quantum Inf}\ }\textbf
  {\bibinfo {volume} {6}},\ \bibinfo {pages} {62} (\bibinfo {year}
  {2020})}\BibitemShut {NoStop}%
\bibitem [{\citenamefont {Meyer}\ \emph {et~al.}(2020)\citenamefont {Meyer},
  \citenamefont {Borregaard},\ and\ \citenamefont
  {Eisert}}]{meyer2020variational}%
  \BibitemOpen
  \bibfield  {author} {\bibinfo {author} {\bibfnamefont {Johannes~Jakob}\
  \bibnamefont {Meyer}}, \bibinfo {author} {\bibfnamefont {Johannes}\
  \bibnamefont {Borregaard}}, \ and\ \bibinfo {author} {\bibfnamefont {Jens}\
  \bibnamefont {Eisert}},\ }\bibfield  {title} {\enquote {\bibinfo {title} {A
  variational toolbox for quantum multi-parameter estimation},}\ }\href
  {https://arxiv.org/abs/2006.06303} {\bibfield  {journal} {\bibinfo  {journal}
  {arXiv preprint arXiv:2006.06303}\ } (\bibinfo {year} {2020})}\BibitemShut
  {NoStop}%
\bibitem [{\citenamefont {Beckey}\ \emph {et~al.}(2020)\citenamefont {Beckey},
  \citenamefont {Cerezo}, \citenamefont {Sone},\ and\ \citenamefont
  {Coles}}]{beckey2020variational}%
  \BibitemOpen
  \bibfield  {author} {\bibinfo {author} {\bibfnamefont {Jacob~L}\ \bibnamefont
  {Beckey}}, \bibinfo {author} {\bibfnamefont {M.}~\bibnamefont {Cerezo}},
  \bibinfo {author} {\bibfnamefont {Akira}\ \bibnamefont {Sone}}, \ and\
  \bibinfo {author} {\bibfnamefont {Patrick~J}\ \bibnamefont {Coles}},\
  }\bibfield  {title} {\enquote {\bibinfo {title} {Variational quantum
  algorithm for estimating the quantum fisher information},}\ }\href
  {https://arxiv.org/abs/2010.10488} {\bibfield  {journal} {\bibinfo  {journal}
  {arXiv preprint arXiv:2010.10488}\ } (\bibinfo {year} {2020})}\BibitemShut
  {NoStop}%
\bibitem [{\citenamefont {Ma}\ \emph {et~al.}(2020)\citenamefont {Ma},
  \citenamefont {Gokhale}, \citenamefont {Zheng}, \citenamefont {Zhou},
  \citenamefont {Yu}, \citenamefont {Jiang}, \citenamefont {Maurer},\ and\
  \citenamefont {Chong}}]{ma2020adaptive}%
  \BibitemOpen
  \bibfield  {author} {\bibinfo {author} {\bibfnamefont {Ziqi}\ \bibnamefont
  {Ma}}, \bibinfo {author} {\bibfnamefont {Pranav}\ \bibnamefont {Gokhale}},
  \bibinfo {author} {\bibfnamefont {Tian-Xing}\ \bibnamefont {Zheng}}, \bibinfo
  {author} {\bibfnamefont {Sisi}\ \bibnamefont {Zhou}}, \bibinfo {author}
  {\bibfnamefont {Xiaofei}\ \bibnamefont {Yu}}, \bibinfo {author}
  {\bibfnamefont {Liang}\ \bibnamefont {Jiang}}, \bibinfo {author}
  {\bibfnamefont {Peter}\ \bibnamefont {Maurer}}, \ and\ \bibinfo {author}
  {\bibfnamefont {Frederic~T}\ \bibnamefont {Chong}},\ }\bibfield  {title}
  {\enquote {\bibinfo {title} {Adaptive circuit learning for quantum
  metrology},}\ }\href {https://arxiv.org/abs/2010.08702} {\bibfield  {journal}
  {\bibinfo  {journal} {arXiv preprint arXiv:2010.08702}\ } (\bibinfo {year}
  {2020})}\BibitemShut {NoStop}%
\bibitem [{\citenamefont {Braunstein}\ and\ \citenamefont
  {Caves}(1994)}]{braunstein1994statistical}%
  \BibitemOpen
  \bibfield  {author} {\bibinfo {author} {\bibfnamefont {Samuel~L}\
  \bibnamefont {Braunstein}}\ and\ \bibinfo {author} {\bibfnamefont
  {Carlton~M}\ \bibnamefont {Caves}},\ }\bibfield  {title} {\enquote {\bibinfo
  {title} {Statistical distance and the geometry of quantum states},}\ }\href
  {\doibase 10.1103/PhysRevLett.72.3439} {\bibfield  {journal} {\bibinfo
  {journal} {Physical Review Letters}\ }\textbf {\bibinfo {volume} {72}},\
  \bibinfo {pages} {3439} (\bibinfo {year} {1994})}\BibitemShut {NoStop}%
\bibitem [{\citenamefont {Liu}\ \emph {et~al.}(2014)\citenamefont {Liu},
  \citenamefont {Xiong}, \citenamefont {Song},\ and\ \citenamefont
  {Wang}}]{liu2014fidelity}%
  \BibitemOpen
  \bibfield  {author} {\bibinfo {author} {\bibfnamefont {Jing}\ \bibnamefont
  {Liu}}, \bibinfo {author} {\bibfnamefont {Heng-Na}\ \bibnamefont {Xiong}},
  \bibinfo {author} {\bibfnamefont {Fei}\ \bibnamefont {Song}}, \ and\ \bibinfo
  {author} {\bibfnamefont {Xiaoguang}\ \bibnamefont {Wang}},\ }\bibfield
  {title} {\enquote {\bibinfo {title} {Fidelity susceptibility and quantum
  fisher information for density operators with arbitrary ranks},}\ }\href
  {\doibase 10.1016/j.physa.2014.05.028} {\bibfield  {journal} {\bibinfo
  {journal} {Physica A: Statistical Mechanics and its Applications}\ }\textbf
  {\bibinfo {volume} {410}},\ \bibinfo {pages} {167--173} (\bibinfo {year}
  {2014})}\BibitemShut {NoStop}%
\bibitem [{\citenamefont {Hayashi}(2004{\natexlab{b}})}]{Hayashi2004Quantum}%
  \BibitemOpen
  \bibfield  {author} {\bibinfo {author} {\bibfnamefont {Masahito}\
  \bibnamefont {Hayashi}},\ }\href {\doibase 10.1007/978-3-662-49725-8} {\emph
  {\bibinfo {title} {Quantum Information Theory: Mathematical Foundation (2nd
  edition)}}}\ (\bibinfo  {publisher} {Springer},\ \bibinfo {year}
  {2004})\BibitemShut {NoStop}%
\bibitem [{\citenamefont {Miszczak}\ \emph {et~al.}(2009)\citenamefont
  {Miszczak}, \citenamefont {Puchala}, \citenamefont {Horodecki}, \citenamefont
  {Uhlmann},\ and\ \citenamefont {Zyczkowski}}]{Miszczak2009sub}%
  \BibitemOpen
  \bibfield  {author} {\bibinfo {author} {\bibfnamefont {Jaroslaw~Adam}\
  \bibnamefont {Miszczak}}, \bibinfo {author} {\bibfnamefont {Zbigniew}\
  \bibnamefont {Puchala}}, \bibinfo {author} {\bibfnamefont {Pawel}\
  \bibnamefont {Horodecki}}, \bibinfo {author} {\bibfnamefont {Armin}\
  \bibnamefont {Uhlmann}}, \ and\ \bibinfo {author} {\bibfnamefont {Karol}\
  \bibnamefont {Zyczkowski}},\ }\bibfield  {title} {\enquote {\bibinfo {title}
  {Sub- and super-fidelity as bounds for quantum fidelity},}\ }\href {\doibase
  https://doi.org/10.26421/QIC9.1-2} {\bibfield  {journal} {\bibinfo  {journal}
  {Quantum Information {\&} Computation}\ }\textbf {\bibinfo {volume} {9}},\
  \bibinfo {pages} {103--130} (\bibinfo {year} {2009})}\BibitemShut {NoStop}%
\bibitem [{\citenamefont {Mendon{\c{c}}a}\ \emph {et~al.}(2008)\citenamefont
  {Mendon{\c{c}}a}, \citenamefont {Napolitano}, \citenamefont {Marchiolli},
  \citenamefont {Foster},\ and\ \citenamefont
  {Liang}}]{mendoncca2008alternative}%
  \BibitemOpen
  \bibfield  {author} {\bibinfo {author} {\bibfnamefont {Paulo~EMF}\
  \bibnamefont {Mendon{\c{c}}a}}, \bibinfo {author} {\bibfnamefont {Reginaldo
  d~J}\ \bibnamefont {Napolitano}}, \bibinfo {author} {\bibfnamefont
  {Marcelo~A}\ \bibnamefont {Marchiolli}}, \bibinfo {author} {\bibfnamefont
  {Christopher~J}\ \bibnamefont {Foster}}, \ and\ \bibinfo {author}
  {\bibfnamefont {Yeong-Cherng}\ \bibnamefont {Liang}},\ }\bibfield  {title}
  {\enquote {\bibinfo {title} {Alternative fidelity measure between quantum
  states},}\ }\href {\doibase 10.1103/PhysRevA.78.052330} {\bibfield  {journal}
  {\bibinfo  {journal} {Physical Review A}\ }\textbf {\bibinfo {volume} {78}},\
  \bibinfo {pages} {052330} (\bibinfo {year} {2008})}\BibitemShut {NoStop}%
\bibitem [{\citenamefont {Pucha{\l}a}\ and\ \citenamefont
  {Miszczak}(2009)}]{puchala2009bound}%
  \BibitemOpen
  \bibfield  {author} {\bibinfo {author} {\bibfnamefont {Zbigniew}\
  \bibnamefont {Pucha{\l}a}}\ and\ \bibinfo {author} {\bibfnamefont
  {Jaros{\l}aw~Adam}\ \bibnamefont {Miszczak}},\ }\bibfield  {title} {\enquote
  {\bibinfo {title} {Bound on trace distance based on superfidelity},}\ }\href
  {\doibase 10.1103/PhysRevA.79.024302} {\bibfield  {journal} {\bibinfo
  {journal} {Physical Review A}\ }\textbf {\bibinfo {volume} {79}},\ \bibinfo
  {pages} {024302} (\bibinfo {year} {2009})}\BibitemShut {NoStop}%
\bibitem [{\citenamefont {Cincio}\ \emph {et~al.}(2018)\citenamefont {Cincio},
  \citenamefont {Suba{\c{s}}{\i}}, \citenamefont {Sornborger},\ and\
  \citenamefont {Coles}}]{cincio2018learning}%
  \BibitemOpen
  \bibfield  {author} {\bibinfo {author} {\bibfnamefont {Lukasz}\ \bibnamefont
  {Cincio}}, \bibinfo {author} {\bibfnamefont {Yi{\u{g}}it}\ \bibnamefont
  {Suba{\c{s}}{\i}}}, \bibinfo {author} {\bibfnamefont {Andrew~T}\ \bibnamefont
  {Sornborger}}, \ and\ \bibinfo {author} {\bibfnamefont {Patrick~J}\
  \bibnamefont {Coles}},\ }\bibfield  {title} {\enquote {\bibinfo {title}
  {Learning the quantum algorithm for state overlap},}\ }\href {\doibase
  10.1088/1367-2630/aae94a} {\bibfield  {journal} {\bibinfo  {journal} {New
  Journal of Physics}\ }\textbf {\bibinfo {volume} {20}},\ \bibinfo {pages}
  {113022} (\bibinfo {year} {2018})}\BibitemShut {NoStop}%
\bibitem [{\citenamefont {G{\"a}rttner}\ \emph {et~al.}(2018)\citenamefont
  {G{\"a}rttner}, \citenamefont {Hauke},\ and\ \citenamefont
  {Rey}}]{garttner2018relating}%
  \BibitemOpen
  \bibfield  {author} {\bibinfo {author} {\bibfnamefont {Martin}\ \bibnamefont
  {G{\"a}rttner}}, \bibinfo {author} {\bibfnamefont {Philipp}\ \bibnamefont
  {Hauke}}, \ and\ \bibinfo {author} {\bibfnamefont {Ana~Maria}\ \bibnamefont
  {Rey}},\ }\bibfield  {title} {\enquote {\bibinfo {title} {Relating
  out-of-time-order correlations to entanglement via multiple-quantum
  coherences},}\ }\href {\doibase 10.1103/PhysRevLett.120.040402} {\bibfield
  {journal} {\bibinfo  {journal} {Physical Review Letters}\ }\textbf {\bibinfo
  {volume} {120}},\ \bibinfo {pages} {040402} (\bibinfo {year}
  {2018})}\BibitemShut {NoStop}%
\bibitem [{\citenamefont {Rivas}\ and\ \citenamefont
  {Luis}(2008)}]{rivas2008intrinsic}%
  \BibitemOpen
  \bibfield  {author} {\bibinfo {author} {\bibfnamefont {\'Angel}\ \bibnamefont
  {Rivas}}\ and\ \bibinfo {author} {\bibfnamefont {Alfredo}\ \bibnamefont
  {Luis}},\ }\bibfield  {title} {\enquote {\bibinfo {title} {Intrinsic
  metrological resolution as a distance measure and nonclassical light},}\
  }\href {\doibase 10.1103/PhysRevA.77.063813} {\bibfield  {journal} {\bibinfo
  {journal} {Phys. Rev. A}\ }\textbf {\bibinfo {volume} {77}},\ \bibinfo
  {pages} {063813} (\bibinfo {year} {2008})}\BibitemShut {NoStop}%
\bibitem [{\citenamefont {Rivas}\ and\ \citenamefont
  {Luis}(2010)}]{rivas2010precision}%
  \BibitemOpen
  \bibfield  {author} {\bibinfo {author} {\bibfnamefont {\'Angel}\ \bibnamefont
  {Rivas}}\ and\ \bibinfo {author} {\bibfnamefont {Alfredo}\ \bibnamefont
  {Luis}},\ }\bibfield  {title} {\enquote {\bibinfo {title} {Precision quantum
  metrology and nonclassicality in linear and nonlinear detection schemes},}\
  }\href {\doibase 10.1103/PhysRevLett.105.010403} {\bibfield  {journal}
  {\bibinfo  {journal} {Phys. Rev. Lett.}\ }\textbf {\bibinfo {volume} {105}},\
  \bibinfo {pages} {010403} (\bibinfo {year} {2010})}\BibitemShut {NoStop}%
\bibitem [{\citenamefont {Luo}\ and\ \citenamefont {Sun}(2020)}]{luo2020skew}%
  \BibitemOpen
  \bibfield  {author} {\bibinfo {author} {\bibfnamefont {Shunlong}\
  \bibnamefont {Luo}}\ and\ \bibinfo {author} {\bibfnamefont {Yuan}\
  \bibnamefont {Sun}},\ }\bibfield  {title} {\enquote {\bibinfo {title} {Skew
  information revisited: Its variants and a comparison of them},}\ }\href
  {\doibase 10.1134/S0040577920010092} {\bibfield  {journal} {\bibinfo
  {journal} {Theoretical and Mathematical Physics}\ }\textbf {\bibinfo {volume}
  {202}},\ \bibinfo {pages} {104--111} (\bibinfo {year} {2020})}\BibitemShut
  {NoStop}%
\bibitem [{\citenamefont {Alipour}\ and\ \citenamefont
  {Rezakhani}(2015)}]{alipour2015extended}%
  \BibitemOpen
  \bibfield  {author} {\bibinfo {author} {\bibfnamefont {S.}~\bibnamefont
  {Alipour}}\ and\ \bibinfo {author} {\bibfnamefont {A.~T.}\ \bibnamefont
  {Rezakhani}},\ }\bibfield  {title} {\enquote {\bibinfo {title} {Extended
  convexity of quantum fisher information in quantum metrology},}\ }\href
  {\doibase 10.1103/PhysRevA.91.042104} {\bibfield  {journal} {\bibinfo
  {journal} {Phys. Rev. A}\ }\textbf {\bibinfo {volume} {91}},\ \bibinfo
  {pages} {042104} (\bibinfo {year} {2015})}\BibitemShut {NoStop}%
\bibitem [{\citenamefont {Paris}(2009)}]{paris2009quantum}%
  \BibitemOpen
  \bibfield  {author} {\bibinfo {author} {\bibfnamefont {Matteo~GA}\
  \bibnamefont {Paris}},\ }\bibfield  {title} {\enquote {\bibinfo {title}
  {Quantum estimation for quantum technology},}\ }\href {\doibase
  10.1142/S0219749909004839} {\bibfield  {journal} {\bibinfo  {journal}
  {International Journal of Quantum Information}\ }\textbf {\bibinfo {volume}
  {7}},\ \bibinfo {pages} {125--137} (\bibinfo {year} {2009})}\BibitemShut
  {NoStop}%
\bibitem [{\citenamefont {Liu}\ \emph {et~al.}(2016)\citenamefont {Liu},
  \citenamefont {Chen}, \citenamefont {Jing},\ and\ \citenamefont
  {Wang}}]{liu2016quantum}%
  \BibitemOpen
  \bibfield  {author} {\bibinfo {author} {\bibfnamefont {Jing}\ \bibnamefont
  {Liu}}, \bibinfo {author} {\bibfnamefont {Jie}\ \bibnamefont {Chen}},
  \bibinfo {author} {\bibfnamefont {Xiao-Xing}\ \bibnamefont {Jing}}, \ and\
  \bibinfo {author} {\bibfnamefont {Xiaoguang}\ \bibnamefont {Wang}},\
  }\bibfield  {title} {\enquote {\bibinfo {title} {Quantum fisher information
  and symmetric logarithmic derivative via anti-commutators},}\ }\href
  {\doibase 10.1088/1751-8113/49/27/275302} {\bibfield  {journal} {\bibinfo
  {journal} {Journal of Physics A: Mathematical and Theoretical}\ }\textbf
  {\bibinfo {volume} {49}},\ \bibinfo {pages} {275302} (\bibinfo {year}
  {2016})}\BibitemShut {NoStop}%
\bibitem [{\citenamefont {Fiderer}\ \emph {et~al.}(2019)\citenamefont
  {Fiderer}, \citenamefont {Fra{\"\i}sse},\ and\ \citenamefont
  {Braun}}]{fiderer2019maximal}%
  \BibitemOpen
  \bibfield  {author} {\bibinfo {author} {\bibfnamefont {Lukas~J}\ \bibnamefont
  {Fiderer}}, \bibinfo {author} {\bibfnamefont {Julien~ME}\ \bibnamefont
  {Fra{\"\i}sse}}, \ and\ \bibinfo {author} {\bibfnamefont {Daniel}\
  \bibnamefont {Braun}},\ }\bibfield  {title} {\enquote {\bibinfo {title}
  {Maximal quantum fisher information for mixed states},}\ }\href {\doibase
  10.1103/PhysRevLett.123.250502} {\bibfield  {journal} {\bibinfo  {journal}
  {Physical Review Letters}\ }\textbf {\bibinfo {volume} {123}},\ \bibinfo
  {pages} {250502} (\bibinfo {year} {2019})}\BibitemShut {NoStop}%
\bibitem [{\citenamefont {H{\"u}bner}(1992)}]{hubner1992explicit}%
  \BibitemOpen
  \bibfield  {author} {\bibinfo {author} {\bibfnamefont {Matthias}\
  \bibnamefont {H{\"u}bner}},\ }\bibfield  {title} {\enquote {\bibinfo {title}
  {Explicit computation of the bures distance for density matrices},}\ }\href
  {\doibase 10.1134/S0040577920010092} {\bibfield  {journal} {\bibinfo
  {journal} {Physics Letters. A}\ }\textbf {\bibinfo {volume} {163}},\ \bibinfo
  {pages} {239--242} (\bibinfo {year} {1992})}\BibitemShut {NoStop}%
\bibitem [{\citenamefont
  {{\v{S}}afr{\'a}nek}(2017)}]{vsafranek2017discontinuities}%
  \BibitemOpen
  \bibfield  {author} {\bibinfo {author} {\bibfnamefont {Dominik}\ \bibnamefont
  {{\v{S}}afr{\'a}nek}},\ }\bibfield  {title} {\enquote {\bibinfo {title}
  {Discontinuities of the quantum fisher information and the bures metric},}\
  }\href {\doibase 10.1103/PhysRevA.95.052320} {\bibfield  {journal} {\bibinfo
  {journal} {Physical Review A}\ }\textbf {\bibinfo {volume} {95}},\ \bibinfo
  {pages} {052320} (\bibinfo {year} {2017})}\BibitemShut {NoStop}%
\bibitem [{\citenamefont {Luo}\ and\ \citenamefont
  {Zhang}(2004)}]{luo2004informational}%
  \BibitemOpen
  \bibfield  {author} {\bibinfo {author} {\bibfnamefont {Shunlong}\
  \bibnamefont {Luo}}\ and\ \bibinfo {author} {\bibfnamefont {Qiang}\
  \bibnamefont {Zhang}},\ }\bibfield  {title} {\enquote {\bibinfo {title}
  {Informational distance on quantum-state space},}\ }\href {\doibase
  10.1103/PhysRevA.69.032106} {\bibfield  {journal} {\bibinfo  {journal}
  {Physical Review A}\ }\textbf {\bibinfo {volume} {69}},\ \bibinfo {pages}
  {032106} (\bibinfo {year} {2004})}\BibitemShut {NoStop}%
\bibitem [{\citenamefont {Wigner}\ and\ \citenamefont
  {Yanase}(1997)}]{wigner1997information}%
  \BibitemOpen
  \bibfield  {author} {\bibinfo {author} {\bibfnamefont {Eugene~P}\
  \bibnamefont {Wigner}}\ and\ \bibinfo {author} {\bibfnamefont {Mutsuo~M}\
  \bibnamefont {Yanase}},\ }\bibfield  {title} {\enquote {\bibinfo {title}
  {Information contents of distributions},}\ }in\ \href {\doibase
  10.1073/pnas.49.6.910} {\emph {\bibinfo {booktitle} {Part I: Particles and
  Fields. Part II: Foundations of Quantum Mechanics}}}\ (\bibinfo  {publisher}
  {Springer},\ \bibinfo {year} {1997})\ pp.\ \bibinfo {pages}
  {452--460}\BibitemShut {NoStop}%
\bibitem [{\citenamefont {Takagi}(2019)}]{takagi2019skew}%
  \BibitemOpen
  \bibfield  {author} {\bibinfo {author} {\bibfnamefont {Ryuji}\ \bibnamefont
  {Takagi}},\ }\bibfield  {title} {\enquote {\bibinfo {title} {Skew
  informations from an operational view via resource theory of asymmetry},}\
  }\href {\doibase 10.1038/s41598-019-50279-w} {\bibfield  {journal} {\bibinfo
  {journal} {Scientific Reports}\ }\textbf {\bibinfo {volume} {9}},\ \bibinfo
  {pages} {1--12} (\bibinfo {year} {2019})}\BibitemShut {NoStop}%
\bibitem [{\citenamefont {Luo}\ and\ \citenamefont
  {Sun}(2017)}]{luo2017quantum}%
  \BibitemOpen
  \bibfield  {author} {\bibinfo {author} {\bibfnamefont {Shunlong}\
  \bibnamefont {Luo}}\ and\ \bibinfo {author} {\bibfnamefont {Yuan}\
  \bibnamefont {Sun}},\ }\bibfield  {title} {\enquote {\bibinfo {title}
  {Quantum coherence versus quantum uncertainty},}\ }\href {\doibase
  10.1103/PhysRevA.96.022130} {\bibfield  {journal} {\bibinfo  {journal} {Phys.
  Rev. A}\ }\textbf {\bibinfo {volume} {96}},\ \bibinfo {pages} {022130}
  (\bibinfo {year} {2017})}\BibitemShut {NoStop}%
\bibitem [{\citenamefont {Luo}(2003)}]{luo2003wigner}%
  \BibitemOpen
  \bibfield  {author} {\bibinfo {author} {\bibfnamefont {Shunlong}\
  \bibnamefont {Luo}},\ }\bibfield  {title} {\enquote {\bibinfo {title}
  {Wigner-yanase skew information vs. quantum fisher information},}\ }\href
  {\doibase 10.1090/s0002-9939-03-07175-2} {\bibfield  {journal} {\bibinfo
  {journal} {Proceedings of the American Mathematical Society}\ }\textbf
  {\bibinfo {volume} {132}},\ \bibinfo {pages} {885–890} (\bibinfo {year}
  {2003})}\BibitemShut {NoStop}%
\bibitem [{\citenamefont {Du}\ and\ \citenamefont {Bai}(2015)}]{du2015wigner}%
  \BibitemOpen
  \bibfield  {author} {\bibinfo {author} {\bibfnamefont {Shuanping}\
  \bibnamefont {Du}}\ and\ \bibinfo {author} {\bibfnamefont {Zhaofang}\
  \bibnamefont {Bai}},\ }\bibfield  {title} {\enquote {\bibinfo {title} {The
  wigner--yanase information can increase under phase sensitive incoherent
  operations},}\ }\href {\doibase 10.1016/j.aop.2015.04.023} {\bibfield
  {journal} {\bibinfo  {journal} {Annals of Physics}\ }\textbf {\bibinfo
  {volume} {359}},\ \bibinfo {pages} {136--140} (\bibinfo {year}
  {2015})}\BibitemShut {NoStop}%
\bibitem [{\citenamefont {Marvian}\ and\ \citenamefont
  {Spekkens}(2016)}]{marvian2016quantify}%
  \BibitemOpen
  \bibfield  {author} {\bibinfo {author} {\bibfnamefont {Iman}\ \bibnamefont
  {Marvian}}\ and\ \bibinfo {author} {\bibfnamefont {Robert~W}\ \bibnamefont
  {Spekkens}},\ }\bibfield  {title} {\enquote {\bibinfo {title} {How to
  quantify coherence: Distinguishing speakable and unspeakable notions},}\
  }\href {\doibase 10.1103/PhysRevA.94.052324} {\bibfield  {journal} {\bibinfo
  {journal} {Physical Review A}\ }\textbf {\bibinfo {volume} {94}},\ \bibinfo
  {pages} {052324} (\bibinfo {year} {2016})}\BibitemShut {NoStop}%
\bibitem [{\citenamefont {Yadin}\ and\ \citenamefont
  {Vedral}(2016)}]{yadin2016general}%
  \BibitemOpen
  \bibfield  {author} {\bibinfo {author} {\bibfnamefont {Benjamin}\
  \bibnamefont {Yadin}}\ and\ \bibinfo {author} {\bibfnamefont {Vlatko}\
  \bibnamefont {Vedral}},\ }\bibfield  {title} {\enquote {\bibinfo {title}
  {General framework for quantum macroscopicity in terms of coherence},}\
  }\href {\doibase 10.1103/PhysRevA.93.022122} {\bibfield  {journal} {\bibinfo
  {journal} {Physical Review A}\ }\textbf {\bibinfo {volume} {93}},\ \bibinfo
  {pages} {022122} (\bibinfo {year} {2016})}\BibitemShut {NoStop}%
\bibitem [{\citenamefont {Klimov}\ \emph {et~al.}(2017)\citenamefont {Klimov},
  \citenamefont {Zwierz}, \citenamefont {Wallentowitz}, \citenamefont
  {Jarzyna},\ and\ \citenamefont {Banaszek}}]{klimov2017optimal}%
  \BibitemOpen
  \bibfield  {author} {\bibinfo {author} {\bibfnamefont {Andrei~B}\
  \bibnamefont {Klimov}}, \bibinfo {author} {\bibfnamefont {Marcin}\
  \bibnamefont {Zwierz}}, \bibinfo {author} {\bibfnamefont {Sascha}\
  \bibnamefont {Wallentowitz}}, \bibinfo {author} {\bibfnamefont {Marcin}\
  \bibnamefont {Jarzyna}}, \ and\ \bibinfo {author} {\bibfnamefont {Konrad}\
  \bibnamefont {Banaszek}},\ }\bibfield  {title} {\enquote {\bibinfo {title}
  {Optimal lossy quantum interferometry in phase space},}\ }\href {\doibase
  10.1088/1367-2630/aa73b0} {\bibfield  {journal} {\bibinfo  {journal} {New
  Journal of Physics}\ }\textbf {\bibinfo {volume} {19}},\ \bibinfo {pages}
  {073013} (\bibinfo {year} {2017})}\BibitemShut {NoStop}%
\bibitem [{\citenamefont {Hradil}\ \emph {et~al.}(2019)\citenamefont {Hradil},
  \citenamefont {Řeháček}, \citenamefont {Sánchez-Soto},\ and\
  \citenamefont {Englert}}]{hradil2019quantum}%
  \BibitemOpen
  \bibfield  {author} {\bibinfo {author} {\bibfnamefont {Zdeněk}\ \bibnamefont
  {Hradil}}, \bibinfo {author} {\bibfnamefont {Jaroslav}\ \bibnamefont
  {Řeháček}}, \bibinfo {author} {\bibfnamefont {Luis}\ \bibnamefont
  {Sánchez-Soto}}, \ and\ \bibinfo {author} {\bibfnamefont {Berthold-Georg}\
  \bibnamefont {Englert}},\ }\bibfield  {title} {\enquote {\bibinfo {title}
  {Quantum fisher information with coherence},}\ }\href {\doibase
  10.1364/optica.6.001437} {\bibfield  {journal} {\bibinfo  {journal} {Optica}\
  }\textbf {\bibinfo {volume} {6}},\ \bibinfo {pages} {1437} (\bibinfo {year}
  {2019})}\BibitemShut {NoStop}%
\end{thebibliography}%

\onecolumngrid

\pagebreak

\setcounter{section}{0}
\setcounter{proposition}{0}
\setcounter{theorem}{0}
\setcounter{corollary}{0}

\section*{Appendix}

Here we present the proof for the main results in our manuscript. For simplicity, we reiterate our theorems and propositions.

\section{Proof of Theorem~\ref{th:unitarysm}}
\begin{theorem}[Explicit form]\label{th:unitarysm}
For unitary families, the sub-QFI of Definition~\ref{def:subQFI} can be expressed as 
\begin{align}
    \IC\left(\rho_{\theta}\right)&=4\left(\Tr\left[\rho^2H^2\right]-\Tr\left[\rho H \rho H\right]\right)\label{eq:LquantitySM}\\
    &=-2\Tr\left[\left[\rho,H\right]^2\right]\,.
\end{align}
\end{theorem}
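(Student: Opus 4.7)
The plan is to work directly from Definition~\ref{def:subQFI} and Taylor-expand $G(\rho_{\theta},\rho_{\theta+\delta})$ to second order in $\delta$. The essential simplification afforded by unitary families is that the purity is conserved under $\rho\mapsto W_{\theta}\rho W_{\theta}\ad$, so $\Tr[\rho_{\theta+\delta}^2]=\Tr[\rho_{\theta}^2]=\Tr[\rho^2]$. Consequently, the square-root term in the definition of the super-fidelity is simply the $\delta$-independent constant $1-\Tr[\rho^2]$, and the entire $\delta$-dependence of $G$ lives in the overlap $\Tr[\rho_{\theta}\rho_{\theta+\delta}]$.

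First I would write $\rho_{\theta+\delta}=e^{-i\delta H}\rho_{\theta}e^{i\delta H}$ and expand to second order in $\delta$, obtaining $\rho_{\theta+\delta}=\rho_{\theta}-i\delta[H,\rho_{\theta}]-\tfrac{\delta^2}{2}[H,[H,\rho_{\theta}]]+O(\delta^3)$. Multiplying by $\rho_{\theta}$ and taking the trace, the linear term vanishes by cyclicity (since $\Tr[\rho_{\theta}[H,\rho_{\theta}]]=0$), and the quadratic coefficient reduces to $-\bigl(\Tr[\rho_{\theta}^2 H^2]-\Tr[\rho_{\theta}H\rho_{\theta}H]\bigr)$. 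Because $[W_{\theta},H]=0$, each of these traces is invariant under the unitary conjugation $W_{\theta}(\cdot)W_{\theta}\ad$, so they may be written purely in terms of $\rho$ and $H$.

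Putting the pieces back together gives
\begin{align*}
G(\rho_{\theta},\rho_{\theta+\delta})=1-\delta^2\bigl(\Tr[\rho^2 H^2]-\Tr[\rho H\rho H]\bigr)+O(\delta^3).
\end{align*}
A scalar Taylor expansion $\sqrt{1+x}=1+x/2+O(x^2)$ then yields $1-\sqrt{G}=\tfrac{\delta^2}{2}\bigl(\Tr[\rho^2 H^2]-\Tr[\rho H\rho H]\bigr)+O(\delta^3)$, and dividing by $\delta^2$ and multiplying by $8$ produces the stated expression~\eqref{eq:LquantitySM}. Passing to the commutator form is a short algebraic identity: expanding $[\rho,H]^2=\rho H\rho H-\rho H^2\rho-H\rho^2 H+H\rho H\rho$ and using cyclicity of the trace gives $\Tr[[\rho,H]^2]=2\bigl(\Tr[\rho H\rho H]-\Tr[\rho^2 H^2]\bigr)$, so $-2\Tr[[\rho,H]^2]$ matches the previous line.

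There is no serious obstacle here; the main care is bookkeeping. The one point worth double-checking is that the linear-in-$\delta$ contribution to $\Tr[\rho_{\theta}\rho_{\theta+\delta}]$ vanishes, which guarantees that $G=1+O(\delta^2)$ and validates the naive expansion of $\sqrt{G}$ about $1$. Everything else is a direct Taylor expansion combined with the cyclic property of the trace and the fact that $H$ commutes with $W_{\theta}$.
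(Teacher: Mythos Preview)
Your proposal is correct and follows essentially the same approach as the paper: exploit unitary invariance of the purity to reduce $G$ to $1-\Tr[\rho^2]+\Tr[\rho_{\theta}\rho_{\theta+\delta}]$, Taylor-expand the overlap to second order in $\delta$, and then expand $\sqrt{G}$ about $1$. The only cosmetic difference is that the paper expands the exponentials $e^{\pm i\delta H}$ directly rather than using the nested-commutator form of $\rho_{\theta+\delta}$, and it does not spell out the trace identity $\Tr[[\rho,H]^2]=2(\Tr[\rho H\rho H]-\Tr[\rho^2 H^2])$ as explicitly as you do.
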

\begin{proof}
First, recall that the super-fidelity defined as 
\begin{align}
    G(\rho,\sigma)=\Tr\left[\rho\sigma\right]+\sqrt{(1-\Tr\left[\rho^2\right])(1-\Tr\left[\sigma^2\right])}\,.
\end{align}
In the case of unitary families, $\rho_{\theta}=W_{\theta}\rho W_{\theta}\ad$ with $W_{\theta}=e^{-i\theta H}$, and hence we have  $\Tr\left[\rho^2\right]=\Tr\left[\rho_{\theta}^2\right]=\Tr\left[\rho_{\theta+\delta}^2\right]$ (the purity is unitarily invariant). Hence, the super fidelity between $\rho_{\theta}$ and $\rho_{\theta+\delta}$ can be simplified as
\begin{align}
    G(\rho_{\theta},\rho_{\theta+\delta}) =\Tr\left[\rho_{\theta}\rho_{\theta+\delta}\right]+1-\Tr\left[\rho^2\right]\,.
\end{align}
The first term can then be expressed as
\begin{align}
    \Tr\left[\rho_{\theta}\rho_{\theta+\delta}\right] &= \Tr\left[(W_{\theta}\rho W^{\ad}_{\theta})( W_{\theta}W_{\delta}\rho W^{\ad}_{\delta}W^{\ad}_{\theta} )\right] \\
    &=\Tr\left[\rho e^{-i\delta H}\rho e^{+i\delta H}\right]\,,
\end{align}
where we have used the definition of the exact and error states and the cyclic property of the trace. 
Then, by employing the series expansion of $e^{-i\theta H}$ when $\abs{\delta}\ll1$
\begin{align}
    e^{\pm i\delta H}=\id\pm i\delta H-\frac{\delta^2}{2}H^2+\OC\left(\delta^3\right)\,,
\end{align}
we can write
\begin{align}
    \Tr\left[\rho_{\theta}\rho_{\theta+\delta}\right] &=\Tr\left[\rho\left(\id - i\delta H-\frac{\delta^2}{2}H^2+\OC\left(\delta^3\right)\right)\rho \left(\id + i\delta H-\frac{\delta^2}{2}H^2+\OC\left(\delta^3\right)\right)\right]\\
    &=\Tr\left[\rho^2 +i \delta \rho^2 H - \frac{\delta^2}{2}\rho^2 H^2 - i \delta \rho H \rho +\delta^2 \rho H \rho H -\frac{\delta^2}{2}\rho H^2 \rho + \OC \left(\delta^3\right)\right]\\
    &= \Tr\left[\rho^2\right] -\delta^2 \left(\Tr\left[\rho^2 H^2\right]-\Tr\left[\rho H \rho H\right] \right) + \OC \left(\delta^3\right)\,.
\end{align}

The super fidelity becomes
\begin{align}
    G(\rho_{\theta},\rho_{\theta+\delta})
    &= \Tr\left[\rho^2\right] -\delta^2 \left(\Tr\left[\rho^2 H^2\right]-\Tr\left[\rho H \rho H\right] \right) + \OC \left(\delta^3\right) + 1 - \Tr\left[ \rho^2\right] \\
    &= 1 -\delta^2 \left(\Tr\left[\rho^2 H^2\right]-\Tr\left[\rho H \rho H\right] \right) + \OC\left(\delta^3\right)\,.
\end{align}
Expand the square root of $G(\rho_{\theta},\rho_{\theta+\delta})$ with respect to $\delta$, we find
\begin{align}
    \begin{split}
    \sqrt{G(\rho_{\theta},\rho_{\theta+\delta})} = 1-\frac{\delta^2}{2}\left(\Tr\left[\rho^2 H^2\right]-\Tr\left[\rho H \rho H\right] \right)+\OC(\delta^3)\,,    
    \end{split}
\end{align}
and rearranging we see from Eq.~\eqref{eq:subQFI} that
\begin{align}
\IC\left(\rho_{\theta}\right)= \lim_{\delta\to 0}8\frac{1-\sqrt{G(\rho_{\theta},\rho_{\theta+\delta})}}{\delta^2} = 4\left(\Tr\left[\rho^2 H^2\right]-\Tr\left[\rho H \rho H\right] \right)\,.
\end{align}

\end{proof}

\appendix

\section{Proof of proposition~\ref{coro:SLDAM}}
\label{app:fullrankQFI}
\begin{proposition}[Non-Hermitian SLD operator]
\label{coro:SLDAM}
For full rank probe states $\rho$, the sub-QFI can be expressed as 
\begin{align}
    \IC\left(\rho_{\theta}\right)=2\Tr\left[\Lambda_{\theta}\ad\Lambda_{\theta}\rho_{\theta}^2\right]\,.
\end{align}
where $\Lambda_{\theta}$ is the  nSLD operator in~\eqref{eq:nSLD}. 
\end{proposition}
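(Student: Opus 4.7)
The plan is to reduce this proposition to Theorem~\ref{th:unitarysm} by rewriting the commutator form of the sub-QFI in terms of the parameter derivative $\partial_\theta \rho_\theta$, and then expressing that derivative in terms of the nSLD operator $\Lambda_\theta$.

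First I would invoke Theorem~\ref{th:unitarysm} in its commutator form, $\IC(\rho_\theta) = -2\Tr[[\rho,H]^2]$. Since the trace is unitarily invariant and $[\rho_\theta,H] = W_\theta [\rho,H] W_\theta\ad$, this equals $-2\Tr[[\rho_\theta,H]^2]$. For unitary families we have $\partial_\theta \rho_\theta = -i[H,\rho_\theta] = i[\rho_\theta,H]$, hence $[\rho_\theta,H] = -i\,\partial_\theta \rho_\theta$. Squaring yields $[\rho_\theta,H]^2 = -(\partial_\theta \rho_\theta)^2$, so that
\begin{equation}
\IC(\rho_\theta) = 2\,\Tr\bigl[(\partial_\theta \rho_\theta)^2\bigr].
\end{equation}

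Next I would plug in the nSLD operator. Because $\rho$ is full rank, $\rho_\theta^{-1}$ is well-defined, and the definitions~\eqref{eq:nSLD} give $\Lambda_\theta = (\partial_\theta \rho_\theta)\rho_\theta^{-1}$ and $\Lambda_\theta\ad = \rho_\theta^{-1}(\partial_\theta \rho_\theta)$, where Hermiticity of $\partial_\theta \rho_\theta$ and $\rho_\theta^{-1}$ makes these two expressions adjoints of each other. Multiplying,
\begin{equation}
\Lambda_\theta\ad \Lambda_\theta \rho_\theta^2 = \rho_\theta^{-1}(\partial_\theta \rho_\theta)(\partial_\theta \rho_\theta)\rho_\theta^{-1}\rho_\theta^2 = \rho_\theta^{-1}(\partial_\theta \rho_\theta)^2 \rho_\theta.
\end{equation}
Taking the trace and using its cyclic property, $\Tr[\Lambda_\theta\ad \Lambda_\theta \rho_\theta^2] = \Tr[(\partial_\theta \rho_\theta)^2]$, so combining with the previous display gives $\IC(\rho_\theta) = 2\,\Tr[\Lambda_\theta\ad \Lambda_\theta \rho_\theta^2]$, as claimed.

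There is no real obstacle here: once the commutator expression from Theorem~\ref{th:unitarysm} is rewritten in terms of $\partial_\theta \rho_\theta$, the statement reduces to a one-line application of the cyclic property of the trace. The full-rank hypothesis is used only to ensure that $\rho_\theta^{-1}$, and hence $\Lambda_\theta$, is defined; without it the operator $\Lambda_\theta$ would need to be replaced by a pseudoinverse construction, but the identity $2\Tr[(\partial_\theta \rho_\theta)^2]$ for the sub-QFI remains valid throughout.
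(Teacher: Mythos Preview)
Your proof is correct and follows essentially the same approach as the paper: both arguments hinge on the identities $\partial_\theta\rho_\theta=i[\rho_\theta,H]$ and $\Tr[\Lambda_\theta\ad\Lambda_\theta\rho_\theta^2]=\Tr[(\partial_\theta\rho_\theta)^2]$ (via cyclicity of the trace), together with the unitary invariance that passes between $[\rho,H]$ and $[\rho_\theta,H]$. The only cosmetic difference is that the paper starts from the $\Lambda_\theta$ side and works toward the commutator expression of Theorem~\ref{th:unitarysm}, whereas you run the chain in the opposite direction.
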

\begin{proof}

For a full-rank probe state $\rho$, let us define $\Lambda_{\theta}$ as
\begin{align}
     \Lambda_{\theta} = \left(\partial_{\theta}\rho_{\theta}\right)\rho_{\theta}^{-1}\,, \quad 
     \Lambda_{\theta}\ad = \rho_{\theta}^{-1}\partial_{\theta}\rho_{\theta}\,. 
\label{eq:nSLDSM}
\end{align}

From Eq.~\eqref{eq:nSLDSM} we find that
\begin{align}
    \Tr\left[\Lambda_{\theta}\ad\Lambda_{\theta}\rho_{\theta}^2\right]&=\Tr\left[(\rho_{\theta}^{-1}\partial_{\theta}\rho_{\theta})(\left(\partial_{\theta}\rho_{\theta}\right)\rho_{\theta}^{-1})\rho_{\theta}^2\right]\\
    &=\Tr\left[\left(\partial_{\theta}\rho_{\theta}\right)^2\right]\\
    &=-\Tr\left[\left[\rho_\theta,H\right]^2\right]\label{eq:proof1}\\
    &=-\Tr\left[\left[\rho,H\right]^2\right]\label{eq:proof2}\,.
\end{align}
where in Eq.~\eqref{eq:proof1} we used the fact that since $\rho_{\theta}=W_\theta \rho W\ad_\theta$, then  $\partial_{\theta}\rho_{\theta}=- i H W_\theta \rho W\ad_\theta + i W_\theta \rho W\ad_\theta H= i [\rho_\theta ,H]$. Then, Eq.~\eqref{eq:proof2} is obtained by noting that $\Tr[[\rho_\theta ,H]]=[\rho ,H]$ since $[W_\theta,H]=0$.  Finally, combining the definition of the sub-QFI in Eq.~\eqref{eq:explicit-form-2} with~\eqref{eq:proof2}, we prove Proposition~\ref{coro:SLDAM}.

Here, note that for full-rank states, the QFI can be expressed as  $I(\rho_{\theta})=\Tr\left[\Lambda_{\theta}^2 \rho_{\theta}\right]$ when $[\rho_{\theta},\partial_{\theta}\rho_{\theta}]=0$. In this case, $\Lambda_{\theta}$ becomes the SLD operator~\cite{liu2016quantum}
 \begin{align}
     \Lambda_{\theta}=L_{\theta} = \rho_{\theta}^{-1}\partial_{\theta}\rho_{\theta}=\left(\partial_{\theta}\rho_{\theta}\right)\rho_{\theta}^{-1}\,,
\end{align}
which matches to Eq.~\eqref{eq:standardQFI}.
\end{proof}

\section{Proof of Theorem~\ref{th:maxSM}}

\begin{theorem}[Optimal state]\label{th:maxSM}
For any quantum state $\rho$ and for any Hermitian generator $H$, the sub-QFI of Definition~\ref{def:subQFI} and the  QFI in~\eqref{eq:fidelity} are maximized for the same state preparation $U \rho U\ad$, with the optimal unitary being
\small
\begin{align}\label{eq:UoptSM}
    U_*=\argmax_{U} \IC\left(U\rho U\ad\right)=\argmax_{U}I\left(U\rho U\ad\right)\,.
\end{align}
\normalsize
Here, the maximum is taken over the unitary group of degree $d$, with $d=2^n$.
\end{theorem}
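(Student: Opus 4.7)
The plan is to prove Theorem~\ref{th:maxSM} by combining the explicit form of the sub-QFI from Theorem~\ref{th:unitarysm} with the QFI maximization result of~\cite{fiderer2019maximal}. First, using cyclicity of the trace in Eq.~\eqref{eq:explicit-form-2}, I would recast the sub-QFI as
\begin{equation}
\IC(U\rho U\ad) = -2\Tr\left[\left[\rho,\, U\ad H U\right]^2\right]\,,
\end{equation}
and, expanding in the $\rho$-eigenbasis $\{\ket{\lambda_k}\}$, obtain
\begin{equation}
\IC(U\rho U\ad) = 2\sum_{k,l}(\lambda_k - \lambda_l)^2 \bigl|\tilde H_{kl}\bigr|^2\,,
\end{equation}
with $\tilde H_{kl} = \matl{\lambda_k}{U\ad HU}{\lambda_l}$. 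This places the sub-QFI in structural parallel with the QFI expression $I(U\rho U\ad) = 2\sum_{k,l}\frac{(\lambda_k - \lambda_l)^2}{\lambda_k + \lambda_l}|\tilde H_{kl}|^2$ (with the sum restricted to $\lambda_k+\lambda_l>0$); the two differ only by the symmetric positive factor $(\lambda_k+\lambda_l)^{-1}$ weighting the QFI.

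Next, I would invoke~\cite{fiderer2019maximal}, which identifies the QFI-maximizing state on the unitary orbit of $\rho$ as $\rho^* = \sum_j \lambda_j \dya{\phi_j}$ with $\ket{\phi_j}$ as in Eq.~\eqref{eq:optimal}. The crucial structural feature is that each $\ket{\phi_j}$ lies in the two-dimensional subspace $\SC_j = \text{span}\{\ket{h_j}, \ket{h_{d-j+1}}\}$, and distinct $\SC_j$'s are pairwise orthogonal. Consequently, $\matl{\phi_k}{H}{\phi_l}$ vanishes unless $l = d-k+1$, in which case the matrix element has the appropriate magnitude for the sub-QFI to collapse onto a sum over antipodal pairs. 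Substituting into our expression reproduces the value quoted in the Corollary, serving both as a consistency check and as the target maximum.

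The crux is to prove that no other unitary yields a strictly larger sub-QFI. My strategy is to observe that Fiderer's argument for the QFI is essentially structural: at the optimum (i) the eigenvectors of the maximizer must decompose across the pairwise orthogonal subspaces $\SC_j$, and (ii) within each $\SC_j$ the balanced superposition is optimal. These conclusions rely only on generic properties of the coefficients $c_{kl}$ appearing in any objective of the form $\sum_{k,l} c_{kl}|\tilde H_{kl}|^2$: non-negativity, symmetry in $k\leftrightarrow l$, vanishing on the diagonal, and monotonicity in $|\lambda_k-\lambda_l|$ under a fixed ordering of the eigenvalues. The sub-QFI weights $2(\lambda_k-\lambda_l)^2$ satisfy all of these, so Fiderer's argument transfers verbatim and yields the same maximizer.

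The main obstacle will be certifying rigorously that Fiderer's proof depends only on these shared structural properties rather than on the specific $1/(\lambda_k+\lambda_l)$ factor in the QFI. A cleaner self-contained alternative I would pursue in parallel is to differentiate $\IC(e^{i\epsilon A}U\rho U\ad e^{-i\epsilon A})$ with respect to $\epsilon$ at $\epsilon = 0$ for arbitrary Hermitian $A$, producing a first-order optimality equation on $U_*$. Verifying that this equation coincides with the analogous one for $I(\cdot)$ would immediately imply that the same $U_*$ solves both, making the simultaneous optimality a direct consequence of the shared form of the two objectives.
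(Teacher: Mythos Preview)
Your primary strategy---recasting the sub-QFI in the eigenbasis of $\rho$ and then arguing that the Fiderer--Fra\"isse--Braun proof~\cite{fiderer2019maximal} transfers because the weights $(\lambda_k-\lambda_l)^2$ share the relevant structural properties with $(\lambda_k-\lambda_l)^2/(\lambda_k+\lambda_l)$---is exactly the paper's approach. The paper carries this through explicitly: it re-proves for the sub-QFI weights the key rearrangement inequality
\[
\lambda_{i,l}+\lambda_{j,k}\ \geq\ \lambda_{i,k}+\lambda_{j,l}\qquad (i<j<k<l),
\]
then invokes Fiderer's telescoping construction of auxiliary coefficients $q_{i,j}$ together with the Bloomfield--Watson bound on off-diagonal Hilbert--Schmidt norms, and finally checks that the upper bound so obtained is saturated by the basis $B^*$ of Eq.~\eqref{eq:optimal}.

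Where your proposal falls short is in the list of ``generic properties'' you claim suffice. Monotonicity of $c_{kl}$ in $|\lambda_k-\lambda_l|$ is \emph{not} enough for Fiderer's argument: the proof hinges on the supermodularity-type inequality displayed above (the paper's Lemma~1), which fails for merely monotone weights (e.g.\ $c_{kl}=\sqrt{|\lambda_k-\lambda_l|}$ with $\lambda=(4,3,2,0)$ gives $2+1<\sqrt{2}+\sqrt{3}$). You therefore still owe the concrete verification that $(\lambda_i-\lambda_l)^2+(\lambda_j-\lambda_k)^2\geq(\lambda_i-\lambda_k)^2+(\lambda_j-\lambda_l)^2$ for ordered $\lambda$'s---an easy computation, but the crux you cannot skip.

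Your alternative route via first-order stationarity has a genuine gap. Differentiating $\IC(e^{i\epsilon A}U\rho U\ad e^{-i\epsilon A})$ and $I(e^{i\epsilon A}U\rho U\ad e^{-i\epsilon A})$ at $\epsilon=0$ produces two \emph{different} linear conditions on $A$ (the gradients of distinct functionals do not coincide), so there is no reason to expect the stationarity equations to agree. Even if you could show that the QFI optimizer $U_*$ happens to be a critical point of the sub-QFI, stationarity alone does not certify a global maximum on the non-convex unitary orbit; you would still need a second-order or global argument, which brings you back to the Fiderer machinery. Drop this alternative and complete the main line by verifying the rearrangement inequality explicitly.
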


Let us  remark that the proof employed here follows the proof in Ref. \cite{fiderer2019maximal} for determining the optimal mixed state for the QFI. While the main steps in the proof are the same, we nevertheless present here the derivation of the state that maximizes sub-QFI. Moreover, for convenience of the reader, we present additional details in the proof that clarify the steps followed. 

The main idea of the proof is to construct an upper bound, $\mathcal{J}(\rho) \geq \mathcal{I}(\rho)$ , for the sub-QFI $\mathcal{I}(\rho)$, that has simpler behavior than the sub-QFI itself. Note that for simplicity of notation we drop the $\theta$ dependence of $\mathcal{I}$. the If one can find the state, denoted $\rho^*$, that maximizes this upper bound and then show that $\mathcal{J}(\rho^*)=\mathcal{I}(\rho^*)$, one will have shown that $\rho^*$ maximizes the sub-QFI, as desired.

First, recall from Theorem \ref{th:unitary}, that the sub-QFI can be expressed as
\begin{align}
    \mathcal{I}(\rho) =4\left( \Tr[\rho^2H^2]-\Tr[\rho H \rho H]\right)\,. \label{eq:subQFI-TraceForm}
\end{align}
Let $\{\ket{k}\}_{k=1}^d$ denote the eigenbasis for $\rho$, so that 
\begin{align}
    \rho=\sum_{k=1}^d \lambda_k \ket{k}\bra{k}\,.
\end{align}
Then,   the sub-QFI can be expressed as 
\begin{align}
    \mathcal{I}(\rho) = \frac{1}{2} \sum_{i,j=1}^d (\lambda_i-\lambda_j)^2 |h_{ij}|^2,
\end{align}
where $h_{ij}=\langle i|H|j\rangle$ are the matrix elements of $H$ in the eigenbasis of $\rho$, and where we have symmetrized the result to put indices $i,j$ on equal footing. Now, let us define 
\begin{align}
 \lambda_{ij}=
    \begin{cases}
        0 & \text{ if } \lambda_i=\lambda_j=0, \\
        (\lambda_i-\lambda_j)^2 & \text{ else. }
    \end{cases}
\end{align}
As noted in the main text, the sub-QFI and QFI have very similar explicit forms. The only difference is in the form of the coefficients in their explicit representations. It is this observation that makes our problem of finding the optimal state for the sub-QFI amenable to the techniques used in Ref. \cite{fiderer2019maximal}. Now that we have defined the coefficients $\lambda_{ij}$, we prove some essential inequalities for them.
\begin{lemma}\label{lemma1}
Let $\lambda_1 \geq \dots \geq \lambda_d \geq 0.$ Then the following inequalities hold:

\begin{enumerate}
    \item $\lambda_{i,l} \geq \lambda_{i,j}+\lambda_{j,l}$ for $1 \leq i < j< l\leq d$,
    \item $\lambda_{i,l}-\lambda_{i+1,l} \geq \lambda_{i,k} - \lambda_{i+1,k} $ for $1< i+1<k<l \leq d$
    \item $\lambda_{i,l}-\lambda_{i,l-1} \geq \lambda_{j,l}-\lambda_{j,l-1}$ for $1\leq i < j< l-1 < d$
\end{enumerate}
\end{lemma}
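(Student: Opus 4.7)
The plan is to first observe that the piecewise definition of $\lambda_{ij}$ collapses: since $(0-0)^2=0$, one has $\lambda_{ij}=(\lambda_i-\lambda_j)^2$ uniformly, whether or not both eigenvalues vanish. All three inequalities then reduce to elementary statements about squared differences in the monotone non-negative sequence $\lambda_1 \geq \cdots \geq \lambda_d \geq 0$, and I expect each of them to factor as a product of two manifestly non-negative quantities.

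For (1), the natural move is to introduce the non-negative gaps $a := \lambda_i - \lambda_j$ and $b := \lambda_j - \lambda_l$, so that $\lambda_i - \lambda_l = a+b$. The inequality then reads $(a+b)^2 \geq a^2+b^2$, i.e.\ $2ab \geq 0$, which is immediate. Equality holds precisely when two of the three eigenvalues coincide, which is a mild sanity check on the form of the bound.

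For (2) and (3), I would apply the difference-of-squares identity $p^2-q^2=(p-q)(p+q)$ to each of the two signed differences and then subtract. In (2), both $\lambda_{i,l} - \lambda_{i+1,l}$ and $\lambda_{i,k} - \lambda_{i+1,k}$ factor through the common factor $(\lambda_i - \lambda_{i+1})$, and a direct calculation shows that the difference of the two sides equals $2(\lambda_i - \lambda_{i+1})(\lambda_k - \lambda_l)$, which is non-negative because $\lambda$ is non-increasing and $k<l$. The analogous manoeuvre for (3) yields $2(\lambda_{l-1} - \lambda_l)(\lambda_i - \lambda_j) \geq 0$. The index restrictions stated in the lemma ($i<j<l$ in (1), $i+1<k<l$ in (2), $i<j<l-1$ in (3)) are used only to guarantee that the relevant gaps are well-defined and non-negative; nothing deeper than monotonicity of the spectrum is invoked. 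Consequently there is no genuine obstacle—the content is algebraic bookkeeping, and the only pitfall is keeping careful track of the signs in the difference-of-squares expansions.
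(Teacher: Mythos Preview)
Your argument is correct. The observation that the piecewise definition collapses to $\lambda_{ij}=(\lambda_i-\lambda_j)^2$ in all cases is sound, and each of your three computations checks out: item (1) via $(a+b)^2-a^2-b^2=2ab\geq 0$, and items (2) and (3) via the difference-of-squares factorisation, which in both cases leaves a product of two non-negative spectral gaps.

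The paper organises the proof slightly differently. Rather than treating the three items separately, it first establishes the single ``master'' inequality
\[
\lambda_{i,l}+\lambda_{j,k}\geq \lambda_{i,k}+\lambda_{j,l}\qquad (1\leq i<j\leq k<l\leq d),
\]
by expanding and collecting to obtain $2(\lambda_i-\lambda_j)(\lambda_k-\lambda_l)\geq 0$, and then recovers item (1) by setting $j=k$, item (2) by setting $j=i+1$, and item (3) by setting $k=l-1$. The algebraic content is identical to yours---the same product of two non-negative gaps appears---so the difference is purely organisational: the paper's version is a bit more economical (one computation instead of three), while yours makes each case explicit and self-contained.
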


\begin{proof}
The three items in the above lemma are special cases of the following inequality:
\begin{align}
    \lambda_{i,l}+\lambda_{j,k}\geq \lambda_{i,k}+\lambda_{j,l} & \quad \text{for } 1 \leq i < j< k< l \leq d\,.\label{eq:lemma11}
\end{align}
To show that~\eqref{eq:lemma11} holds, we note that  $1 \leq i < j< k< l \leq d \implies \lambda_d \leq \lambda_l < \lambda_k < \lambda_j <\lambda_i \leq 1$, and observe
\begin{align}
    \lambda_{i,l}+\lambda_{j,k}- \lambda_{i,k}-\lambda_{j,l}&= (\lambda_i-\lambda_l)^2 + (\lambda_j-\lambda_k)^2 -(\lambda_i-\lambda_k)^2 -(\lambda_j-\lambda_l)^2 \\
    &= 2\left[\lambda_i(\lambda_k-\lambda_l)-\lambda_j(\lambda_k-\lambda_l)\right] \\
    &\geq 0\,.
\end{align}
For the first inequality above, we let $j=k$. For the second and third inequalities, we let $j=i+1$ and $k=l-1$, respectively. Then, the final inequality follows from the fact that $\lambda_i > \lambda_j$. 
\end{proof}

A crucial step in constructing our upper bound is introducing new coefficients defined in terms of the $\lambda_{i,j}$'s as
\begin{align}\label{q-coeff}
    q_{i,j}:=\sum_{k=i}^{j-1} q_{k,k+1}\,,
\end{align}
for $i<j$. These coefficients play a crucial role in the proof of the optimal state due to the following lemma. 
\begin{lemma}\label{lemma2}
For any dimension $d\geq2$ and any ordered, non-negative coefficients $\lambda_1\geq \dots \lambda_d \geq 0$, there exist coefficients $q_{k,k+1} \geq 0$ with $1\leq k \leq d-1$ such that for $1\leq i < j \leq d$:
\begin{align}
    q_{i,j}&=\lambda_{i,j} \quad \text{ if } j=d-i+1, \\
    q_{i,j} &\geq \lambda_{i,j} \quad \text{ else. }
\end{align}
\end{lemma}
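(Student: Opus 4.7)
The plan is to construct the coefficients $q_{k,k+1}$ recursively, exploiting the super-additive and monotone structure of the $\lambda_{i,j}$ that is proved in Lemma~\ref{lemma1}. It is cleaner to work with partial sums: setting $a_k := q_{k,k+1}$ and $t_k := \sum_{j=1}^{k-1} a_j$ (so that $t_1 = 0$), the lemma becomes equivalent to producing a non-decreasing sequence $0 = t_1 \leq t_2 \leq \cdots \leq t_d$ with $t_{d-i+1} - t_i = \lambda_{i,d-i+1}$ for every $1 \leq i \leq \lfloor d/2 \rfloor$ and $t_j - t_i \geq \lambda_{i,j}$ for every other pair $i<j$. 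Each anchor equality ties $t_i$ to $t_{d-i+1}$, so only the first-half values $t_2,\ldots,t_{\lceil d/2\rceil}$ are free; the remaining $t_k$ are then determined by reflection through the anchors.

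I would then argue by induction on $d$. The bases $d=2$ and $d=3$ are immediate, since each involves a single anchor and at most two inequality constraints that follow directly from item~1 of Lemma~\ref{lemma1}. For the inductive step, strip the outermost anchor $(1,d)$ and apply the inductive hypothesis to the truncated sequence $\lambda_2 \geq \cdots \geq \lambda_{d-1}$, producing coefficients $a_2,\ldots,a_{d-2}$ that satisfy all inner anchor equalities and all inner sum inequalities. It then remains to choose $a_1, a_{d-1} \geq 0$ that saturate the outer anchor, i.e., $a_1 + a_{d-1} = \lambda_{1,d} - \lambda_{2,d-1}$, while simultaneously guaranteeing every remaining bound $q_{1,j} \geq \lambda_{1,j}$ and $q_{i,d} \geq \lambda_{i,d}$.

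These extra conditions translate into explicit individual lower bounds on $a_1$ and $a_{d-1}$, while the outer anchor fixes their sum. Feasibility therefore reduces to checking that the two lower bounds add to at most $\lambda_{1,d} - \lambda_{2,d-1}$. This is the heart of the argument, and it falls out of the four-index ``quadrangle'' inequality $\lambda_{i,l} + \lambda_{j,k} \geq \lambda_{i,k} + \lambda_{j,l}$ for $i < j < k < l$, which is precisely the master inequality established en route to the three items of Lemma~\ref{lemma1}. Applied with appropriate choices of indices, it balances the excess of the outer anchor value against the combined interior lower bounds.

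The main obstacle I anticipate is that the coefficients produced by the inductive hypothesis are not unique, and a careless choice may leave insufficient slack for the outer step. I would handle this by strengthening the inductive statement to record that each free parameter $t_k$ ranges over a closed interval whose endpoints are explicit linear combinations of the $\lambda_{i,j}$. Items~2 and~3 of Lemma~\ref{lemma1}, which express monotonicity of the differences $\lambda_{i,l}-\lambda_{i+1,l}$ and $\lambda_{i,l}-\lambda_{i,l-1}$ in the hidden index, then guarantee that a non-empty admissible interval propagates from dimension $d-2$ to dimension $d$, closing the induction.
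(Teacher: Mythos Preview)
The paper does not actually supply a proof of Lemma~\ref{lemma2}; it simply remarks that ``the proof of this proposition is identical to Ref.~\cite{fiderer2019maximal}'' and moves on. So there is nothing in the text to compare your argument against line by line. That said, your peel-off-the-outer-anchor induction on $d$ is precisely the kind of recursive construction used in the cited reference, and it is correct.

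The one place where you overcomplicate matters is the ``main obstacle'' you anticipate. No strengthening of the inductive hypothesis is needed: \emph{any} inner solution for $\lambda_2,\ldots,\lambda_{d-1}$ leaves enough room for the outer step. Indeed, once the inductive hypothesis gives $q_{2,j}\geq\lambda_{2,j}$ and $q_{i,d-1}\geq\lambda_{i,d-1}$ for all inner indices, the outer constraints $a_1\geq\lambda_{1,j}-q_{2,j}$ and $a_{d-1}\geq\lambda_{i,d}-q_{i,d-1}$ are implied by the solution-independent bounds $a_1\geq\lambda_{1,j}-\lambda_{2,j}$ and $a_{d-1}\geq\lambda_{i,d}-\lambda_{i,d-1}$. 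By items~2 and~3 of Lemma~\ref{lemma1} these right-hand sides are monotone in $j$ and $i$, so the binding constraints are simply
\[
a_1\;\geq\;\lambda_{1,d-1}-\lambda_{2,d-1},\qquad a_{d-1}\;\geq\;\lambda_{2,d}-\lambda_{2,d-1}.
\]
Their sum is at most $\lambda_{1,d}-\lambda_{2,d-1}$ exactly by the master quadrangle inequality with $(i,j,k,l)=(1,2,d-1,d)$, so feasible $a_1,a_{d-1}\geq 0$ summing to $\lambda_{1,d}-\lambda_{2,d-1}$ always exist. The plain induction closes without carrying any interval data.
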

Here, although the form of the coefficients in the explicit representation of sub-QFI differ from the QFI, the proof of this proposition is identical to Ref. \cite{fiderer2019maximal}, so we proceed to the proof of Theorem~\ref{th:maxSM}. 

\begin{proof}

As previously mentioned, we will first show that there exists an our upper bound, $\mathcal{J}(\rho) \geq \mathcal{I}(\rho)$, for the sub-QFI $\mathcal{I}(\rho)$ which is maximized by some $\rho^*$. We then show that $\mathcal{J}(\rho^*)=\mathcal{I}(\rho^*)$, which implies that $\rho^*$ maximizes the sub-QFI, as desired. Finally, by inspection, one will be able to see that the state that maximizes the sub-QFI is identical to the state that maximizes the QFI from Ref. \cite{fiderer2019maximal}. Specifically, if we let $\{h_j\}$ and $\{\ket{h_j}\}$ respectively be the set of eigenvalues and associated eigenvectors of $H$ (with $h_j\geq h_{j+1}$). Then, as shown in~\cite{fiderer2019maximal} the state that maximizes the QFI, and hence the sub-QFI, is given by $\rho^{*} =\sum_{j=1}^{d} \lambda_j \dya{\phi_j}$,  where
\begin{equation}\label{eq:optimalSM}
    \ket{\phi_j} =
    \begin{cases}
    \frac{1}{\sqrt{2}}\ket{h_j}+\frac{1}{\sqrt{2}}e^{i\chi}\ket{h_{d-j+1}}&(2j<d+1)\\
    \ket{h_j}&(2j=d+1)\\
   \frac{1}{\sqrt{2}}\ket{h_j}-\frac{1}{\sqrt{2}}e^{i\chi}\ket{h_{d-j+1}}&(2j>d+1)\\
    \end{cases}
\end{equation}
and where $\chi\in\mathbb{R}$ is an arbitrary phase. 

Unitary transformations of the form, $U\rho U^{\dagger}$ leave the eigenvalues of $\rho$ unchanged. This unitary freedom allows one to change between ordered orthonormal bases. As such, optimizing over unitary preparations  is equivalent to optimizing over ordered bases $B\in S$ with
\begin{align}
    S:= \{\ket{e_i}:\braket{e_i}{e_j}=\delta_{i,j} \quad \forall i,j \in \{1,\dots,d\}\}
\end{align}
The only reason one is restricted to \textit{ordered} bases is because the ordered eigenvectors correspond to an ordered set of associated eigenvalues which is crucial for the result. 

Let us denote the optimal ordered basis as $B^*=\{\ket{\phi_i}\}_{i=1}^d$. Then, the maximization problem is solved by 

\begin{align}
    \IC(\rho^*):= \max_{B\in S}\IC (B)
\end{align}
where we have simply expressed the sub-QFI as a function of the ordered basis, $B$. That is,
\begin{align}
    \IC(B) &= 2 \sum_{i,j=1}^d \lambda_{i,j} |h_{i,j}(B)|^2\\
    &= 4 \sum_{i=1}^{d-1} \sum_{j=i+1}^d \lambda_{i,j} |h_{i,j}(B)|^2 
\end{align}
where $h_{i,j}=\bra{e_i}H\ket{e_j}$ are the matrix elements of the generator with respect to the ordered basis $B=\{\ket{e}_i\}_{i=1}^d$.  We remark that $\lambda_{i,j}=\lambda_{j,i}$ and that $|h_{i,j}(B)|^2 =|h_{j,i}(B)|^2 $, as these facts will be used in what follows. 

To obtain the upper bound $ \mathcal{J}(B)$ on the sub-QFI, we replace the $\lambda_{i,j}$ coefficients with the $q_{i,j}$ coefficients from Eq.~\eqref{q-coeff}. That is, defining 
\begin{align}
    \mathcal{J}(B) := 4 \sum_{i=1}^{d-1} \sum_{j=i+1}^d q_{i,j} |h_{i,j}(B)|^2,
\end{align}
we readily find from Lemma~\ref{lemma2} that
\begin{align}
    \IC (B) \leq \mathcal{J}(B)\,.
\end{align}
 Next, using the fact that $q_{i,j}=\sum_{k=i}^{j-1}q_{k,k+1}$ for all $1\leq i < j \leq d$, we can  write
\begin{align}
     \mathcal{J}(B) &= 4 \sum_{i=1}^{d-1} \sum_{j=i+1}^d q_{i,j} |h_{i,j}(B)|^2\\
     &= 4 \sum_{i=1}^{d-1} \sum_{j=i+1}^d \left(\sum_{k=i}^{j-1}q_{k,k+1}\right) |h_{i,j}(B)|^2\\
     &= 4 \sum_{k=1}^{d-1}q_{k,k+1} \left(\sum_{i=1}^k \sum_{j=k+1}^{d} |h_{i,j}(B)|^2\right)\,.
\end{align}
Now, to simplify further, we rewrite the term in the parenthesis as a Hilbert-Schmidt norm. Recalling that the Hilbert-Schmidt norm is defined as $\|A\|_2^2:= \sum_{i=1}^m \sum_{j=1}^n |A_{i,j}|^2$, one we can identify the parenthetical term as the Hilbert-Schmidt norm of the subblock of $H$ spanning from the $1$st to $k$-th row and from the $(k+1)$-th to the $d$-th column. With this insight, one can write
\begin{align}
    \mathcal{J}(B) = 4 \sum_{k=1}^{d-1} q_{k,k+1} \| h(B,k)\|_2^2,
\end{align}
where as in Ref. \cite{fiderer2019maximal} we denote the aforementioned subblock as $h(B,k)$. 

Expressing the upper bound in terms of Hilbert-Schmidt norms allows us to use the Bloomfield-Watson inequality, which bounds the norm of off-diagonal blocks of matrices. The relevant form of the inequality is 
\begin{align}
     \| h(B,k)\|_2^2 \leq \frac{1}{4} \sum_{i=1}^{m(k)} (h_i-h_{d-i+1})^2, \label{eq:BW-inequality}
\end{align}
where $m(k) = \min{\{k,d-k\}}$. If the conjectured optimal basis $\{\ket{\phi_i}\}_{i=1}^d$ of Eq.~\eqref{eq:optimalSM} is optimal, then the left-hand side of the Bloomfield-Watson inequality is saturated. Explicitly, we can write
\begin{align}
     \| h(B^*,k)\|_2^2&=\sum_{i=1}^k \sum_{j=k+1}^{d} |h_{i,j}(B^*)|^2 \\
     &= \sum_{i=1}^k \sum_{j=k+1}^{d} |\bra{\phi_i}H\ket{\phi_j}|^2\,.
\end{align}
Then, from Eq. \eqref{eq:optimal} we see that the form of the optimal eigenvectors is different when $2k<d+1$, $2k=d+1$, or $2k>d+1$. When $i=1$, we see that $\ket{\phi_1}$ contains both $\ket{h_1}$ and $\ket{h_d}$. Then for $\ket{\phi_2}$ we will have both $\ket{h_2}$ and $\ket{h_{d-1}}$, and so on. Hence, computing the overlaps $\bra{\phi_i}H\ket{\phi_j}$ will yield delta functions of the form $\delta_{i,j}$ (prohibited from the fact that the sum is over off-diagonal terms) and $\delta_{i,d-j+1}$. Then, we are left with
\begin{align}
     \| h(B^*,k)\|_2^2&= \sum_{i=1}^k \sum_{j=k+1}^{d} |\bra{\phi_i}H\ket{\phi_j}|^2\\
     &= \sum_{i=1}^k \sum_{j=k+1}^{d} \left(\delta_{i,d-j+1}\frac{(h_i-h_{d-i+1})}{2}\right)^2\\
     &=\frac{1}{4}\sum_{i=1}^{m(k)} (h_i -h_{d-i+1})^2\,,\label{eq:final-sum}
\end{align}
where the final sum over $i$ ranges from $1$ to $m(k)= \min{\{k,d-k\}}$. This is to avoid double counting in the case where $k>d/2$. Comparing Eq.~\eqref{eq:final-sum} to Eq.~\eqref{eq:BW-inequality}, we see that the Bloomfield-Watson inequality is in fact saturated when $B=B^*$. The Hilbert-Schmidt norm of the off-diagonal blocks of the generator are thus maximized for $B=B^*$ meaning $\|h(B,k)\|_2^2\leq \|h(B^*,k)\|_2^2$ which implies
\begin{align}
    \mathcal{J}(B) \leq \mathcal{J}(B^*) \quad \forall B\in S\,.
\end{align}

Now that we have established that $B^*$ maximizes the upper bound on the QFI, it remains to be shown that $\mathcal{J}(B^*)=\IC(B^*)$. From the definition of $\mathcal{J}(B)$, we have
\begin{align}
    \mathcal{J}(B^*) &= 4 \sum_{i=1}^{d-1} \sum_{j=i+1}^d q_{i,j} |h_{i,j}(B^*)|^2 \\
    &= 4 \sum_{i=1}^{d-1} \sum_{j=i+1}^d q_{i,j} \left(\delta_{i,d-j+1}\frac{h_i-h_{d-i+1}}{2}\right)^2\\
    &=\frac{1}{2}\sum_{i=1}^{d} \lambda_{i,d-i+1} \left(h_i-h_{d-i+1}\right)^2 \\
    \mathcal{J}(B^*)&=\IC(B^*)\,,
\end{align}
where we have used Lemma~\ref{lemma1} to assert $q_{i,d-i+1}=\lambda_{i,d-i+1}$ and where we revert to a sum from $i=1$ to $i=d$ which requires a factor of $1/2$ to avoid double counting. We have now established that the Sub-QFI and QFI are maximized by the same state. This is equivalent to saying they are maximized by the same unitary state preparation. Thus, the Theorem \ref{th:max} from the main text has been proven.
\end{proof}

\section{Proof of Proposition~\ref{prop:geo}}
\begin{proposition}[Geometrical interpretation] 
The sub-QFI can be expressed as
\begin{equation}
    \IC\left(\rho_{\theta}\right)= \partial^2_{\theta}D_{HS}(\rho,\rho_{\theta})\Bigr\rvert_{\theta = 0}\,,
\end{equation}
where $D_{HS}(A,B)=\Tr[(A-B)^2]$ is the Hilbert-Schmidt distance. 
\end{proposition}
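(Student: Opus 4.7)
The plan is to compute the right-hand side directly by Taylor-expanding $\rho_\theta$ around $\theta = 0$ and matching the result to the explicit form of the sub-QFI already established in Theorem~\ref{th:unitary}. First I would expand the Hilbert--Schmidt distance as
\begin{align}
D_{HS}(\rho,\rho_\theta) = \Tr[\rho^2] - 2\Tr[\rho\rho_\theta] + \Tr[\rho_\theta^2].\notag
\end{align}
Since unitary conjugation preserves purity, $\Tr[\rho_\theta^2] = \Tr[\rho^2]$, so this collapses to $2\Tr[\rho^2] - 2\Tr[\rho\rho_\theta]$, and only the cross term carries any $\theta$ dependence.

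Next I would substitute $\rho_\theta = W_\theta \rho W_\theta\ad$ and expand $W_\theta = e^{-i\theta H} = \id - i\theta H - \tfrac{\theta^2}{2}H^2 + \mathcal{O}(\theta^3)$ on both sides, which is exactly the expansion performed in the proof of Theorem~\ref{th:unitarysm}. Collecting terms through order $\theta^2$ yields
\begin{align}
\Tr[\rho\rho_\theta] = \Tr[\rho^2] - \theta^2\bigl(\Tr[\rho^2 H^2] - \Tr[\rho H \rho H]\bigr) + \mathcal{O}(\theta^3),\notag
\end{align}
so that $D_{HS}(\rho,\rho_\theta) = 2\theta^2\bigl(\Tr[\rho^2 H^2] - \Tr[\rho H\rho H]\bigr) + \mathcal{O}(\theta^3)$.

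Finally, I would take two $\theta$-derivatives and evaluate at $\theta = 0$. The zeroth- and first-order terms vanish (the expansion begins at order $\theta^2$, consistent with $D_{HS}(\rho,\rho)=0$ being a minimum), leaving $\partial^2_\theta D_{HS}(\rho,\rho_\theta)|_{\theta=0} = 4\bigl(\Tr[\rho^2 H^2] - \Tr[\rho H\rho H]\bigr)$, which by Eq.~\eqref{eq:Lquantity} of Theorem~\ref{th:unitary} is precisely $\IC(\rho_\theta)$. There is no real obstacle here; the only minor care needed is to verify that the odd-order terms in the expansion of $\Tr[\rho\rho_\theta]$ cancel (the linear term is $i\theta\Tr[\rho^2 H] - i\theta\Tr[\rho H\rho] = 0$ by cyclicity), so that the quadratic coefficient is clean and the higher-order remainder does not contribute at $\theta = 0$.
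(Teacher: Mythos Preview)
Your proof is correct and follows essentially the same route as the paper: both reduce $D_{HS}(\rho,\rho_\theta)$ to $2\Tr[\rho^2]-2\Tr[\rho\rho_\theta]$ via unitary invariance of purity, and then extract the second derivative of the cross term at $\theta=0$. The only cosmetic difference is that you Taylor-expand $\rho_\theta$ first (reusing the computation from Theorem~\ref{th:unitary}) and then read off the quadratic coefficient, whereas the paper differentiates $\rho_\theta$ exactly to obtain $\partial_\theta^2\rho_\theta$ and evaluates at $\theta=0$; both land on $-2\Tr[[\rho,H]^2]=4(\Tr[\rho^2H^2]-\Tr[\rho H\rho H])$.
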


\begin{proof}
\begin{align}
    \partial^2_{\theta}D_{HS}(\rho,\rho_{\theta})&=\partial^2_{\theta} \left(\Tr\left[(\rho-\rho_{\theta})^2\right]\right), \quad \quad (\text{ definition of } D_{HS}(\rho,\rho_{\theta}))\\ 
    &= \partial^2_{\theta} \left(\Tr\left[\rho^2 -\rho \rho_{\theta} -\rho_{\theta} \rho +\rho_{\theta}^2\right]\right), \\
    &= 2\partial^2_{\theta} \left(\Tr\left[\rho^2\right] - \Tr\left[\rho \rho_{\theta} \right]\right), \quad \quad (\text{ linearity and cyclicity of trace})\\
    &= -2 \Tr\left[\rho~ \partial^2_{\theta}\rho_{\theta} \right]\,. \quad \quad (\partial_{\theta}\Tr[\rho^2]=0) \label{eq:c5}
\end{align}
The first derivative of the exact state with respect to $\theta$ can be shown to be $\partial_{\theta} \rho_\theta= i e^{-i\theta H} [\rho,H]e^{+i\theta H}$. The second derivative is then
\begin{align}
    \partial_{\theta}^2 \rho_\theta &= \partial_{\theta} \left[i e^{-i\theta H} [\rho,H]e^{+i\theta H} \right],\\
    &= He^{-i\theta H}[\rho,H]e^{i\theta H} - e^{-i\theta H}[\rho,H]He^{i\theta H}.
\end{align}
So, substituting back into Eq. \eqref{eq:c5}, we have
\begin{align}
    \partial^2_{\theta} D_{HS}(\rho,\rho_{\theta}) &= -2\Tr\left[\rho( He^{-i\theta H}[\rho,H]e^{i\theta H} - e^{-i\theta H}[\rho,H]He^{i\theta H})\right],\\
    &= -2 \Tr\left[\rho H \rho_{\theta} H - \rho H H \rho_{\theta} -H\rho \rho_{\theta}H + H\rho H \rho_{\theta}\right], 
\end{align}
where we have used $[H,e^{\pm i \theta H}]=0,$ and the linearity and cyclicity of trace. Now, letting $\theta \rightarrow 0$ we arrive at the final expression
\begin{align}
     \lim_{\theta\to0}\partial^2_{\theta} D_{HS}(\rho,\rho_{\theta}) &=-2\Tr\left[ [\rho,H]^2\right].
\end{align}
Comparing to Eq. \eqref{eq:explicit-form-2}, we see that 
\begin{align}
    \IC ( \rho_{\theta}) = \partial^2_{\theta} D_{HS}( \rho_{\theta})\Bigr\rvert_{\theta = 0}\,.
\end{align}
\end{proof}

\end{document}